\newtheorem{theorem}{Theorem}[section]
\newtheorem{definition}{Definition}[section]
\newtheorem{lemma}[theorem]{Lemma}
\newtheorem{proposition}[theorem]{Proposition}
\newtheorem{corollary}[theorem]{Corollary}
\newtheorem{example}{Example}[section]
\def\a{\alpha}
\def\gm{\gamma}
\def\d{\delta}
\def\k{\kappa}
\def\l{\lambda}
\def\G{\Gamma}
\def\F{{\mathbb F}}
\def\b{{\mathbf b}}
\def\c{{\mathbf c}}
\def\s{{\mathbf s}}
\def\x{{\mathbf x}}
\def\0{{\mathbf 0}}
\def\1{{\mathbf 1}}
\def\cC{{\mathcal C}}
\def\cG{{\mathcal G}}
\def\cM{{\mathcal M}}
\def\cS{{\mathcal S}}
\def\oJ{{\overline{J}}}
\def\oT{\overline{T}}
\def\mfB{{\mathfrak B}}
\def\mfC{{\mathfrak C}}
\def\mfG{{\mathfrak G}}
\def\mfP{{\mathfrak P}}
\def\mfR{{\mathfrak R}}
\def\mfT{{\mathfrak T}}
\def\TW{\mbox{\textsf{TW}}}
\def\tot{{\text{tot}}}
\def\tree{{\text{tree}}}
\def\path{{\text{path}}}
\def\vctree{{\text{vc-tree}}}
\def\vcpath{{\text{vc-path}}}
\def\bar{\overline}
\def\del{\setminus}
\def\to{\rightarrow}
\def\define{\stackrel{\mbox{\footnotesize def}}{=}}
\def\beq{\begin{equation}}
\def\eeq{\end{equation}}
\title{Constraint Complexity of Realizations of Linear Codes \\
on Arbitrary Graphs}
\thanks{This work was supported by a Discovery Grant from the 
Natural Sciences and Engineering Research Council (NSERC), Canada.}
\author{Navin Kashyap} 
\thanks{The author is with the Department of Mathematics and Statistics,
Queen's University, Kingston, ON K7L 3N6, Canada.
Email: \texttt{nkashyap@mast.queensu.ca}}
\renewcommand{\markboth}[2]
{\renewcommand{\leftmark}{\scshape{#1}}\renewcommand{\rightmark}{\scshape{#2}}}
\begin{document}

\renewcommand{\thefootnote}{\arabic{footnote}}
\setcounter{footnote}{0}
\begin{abstract}
A graphical realization of a linear code $\cC$ consists of an
assignment of the coordinates of $\cC$ to the vertices of a graph,
along with a specification of linear state spaces and linear
``local constraint'' codes to be associated with the edges and vertices,
respectively, of the graph. The $\k$-complexity of a graphical
realization is defined to be the largest dimension of any of its
local constraint codes. $\k$-complexity is a reasonable measure 
of the computational complexity of a sum-product decoding algorithm 
specified by a graphical realization. The main focus of this paper is on
the following problem: given a linear code $\cC$ and a graph $\cG$,
how small can the $\k$-complexity of a realization of $\cC$ on $\cG$ be?
As useful tools for attacking this problem, we introduce the Vertex-Cut
Bound, and the notion of ``vc-treewidth'' for a graph, which is 
closely related to the well-known graph-theoretic notion of treewidth.
Using these tools, we derive tight lower bounds 
on the $\k$-complexity of any realization of $\cC$ on $\cG$.
Our bounds enable us to conclude that good error-correcting codes can 
have low-complexity realizations only on graphs with large vc-treewidth.
Along the way, we also prove the interesting result that
the ratio of the $\k$-complexity of the best conventional trellis realization 
of a length-$n$ code $\cC$ to the $\k$-complexity of the 
best cycle-free realization of $\cC$ grows at most logarithmically
with codelength $n$. Such a logarithmic growth rate is, in fact, achievable.

\end{abstract}

\date{\today}
\maketitle



\section{Introduction\label{intro}}

The study of graphical models of codes and the associated
message-passing decoding algorithms is a major focus 
of current research in coding theory. This is attributable to 
the fact that coding schemes using graph-based iterative 
decoding strategies --- \emph{e.g.}, turbo codes and low-density 
parity check (LDPC) codes --- have low implementation complexity,
while their performance is close to the optimum predicted by theory.
A unified treatment of graphical models and the associated
decoding algorithms began with the work of Wiberg, Loeliger and Koetter 
\cite{wiberg},\cite{WLK95}, and has since been abstracted and refined 
under the framework of the generalized distributive law \cite{AM00},
factor graphs \cite{KFL01}, and normal realizations 
\cite{For01},\cite{For03}. In fact, the study of cycle-free 
graphical models of codes (\emph{i.e.}, models in which the 
underlying graphs are cycle-free) can be traced back to 
the introduction of the Viterbi decoding algorithm in the 1960's, 
which led to the study of trellis representations of codes. 
A comprehensive account of the history and development of 
trellis representations can be found in \cite{vardy}.

In this work, we will follow the approach of Forney 
\cite{For01},\cite{For03}, and Halford and Chugg \cite{halford} in 
studying the general ``extractive'' problem of constructing low-complexity 
graphical models for a given linear code. Roughly speaking, a low-complexity 
graphical model is one that implies a low-complexity decoding algorithm. 
In particular, in this paper, we investigate the question of 
how small the complexity of an arbitrary graphical model for 
a given code can be.

We briefly introduce graphical models here; a detailed
description can be found in Section~\ref{background_section}.
A graph decomposition of a code $\cC$ is a mapping of 
the set of coordinates of $\cC$ to the set of vertices of a graph.
A graph decomposition may be viewed as an assignment of symbol variables 
to the vertices of the graph. A graph decomposition can be extended 
to a graphical model which additionally assigns state variables 
to the edges of the graph, and specifies a local constraint code
at each vertex of the graph. The full behavior of the model is 
the set of all configurations of symbol and state variables 
that satisfy all the local constraints. Such a model is called 
a graphical realization of $\cC$ if the restriction of 
the full behavior to the set of symbol variables 
is precisely $\cC$. The realization is said to be cycle-free
if the underlying graph in the model has no cycles. A trellis 
representation of a code can be viewed as a cycle-free realization 
in which the underlying graph is a simple path. In contrast, 
a \emph{tailbiting} trellis representation \cite{KV02},\cite{KV03}
is a graphical realization in which the underlying graph consists
of a single cycle.

We will focus our attention on the case of realizations of linear codes 
on \emph{connected} graphs only. Indeed, there is no loss of 
generality in doing so, since a linear code $\cC$ has a 
realization on a graph $\cG$ that is not connected if and only 
if $\cC$ can be expressed as the direct sum of codes that may be
individually realized on the connected components of $\cG$ \cite{For01}. 
In this context, we will refer to cycle-free graphical realizations 
simply as tree realizations, as the underlying graph is a connected, 
cycle-free graph, \emph{i.e.}, a tree. 

It is by now well known that any graphical realization of a code 
specifies a canonical iterative message-passing decoding algorithm, 
namely, the sum-product algorithm, on the underlying graph 
\cite{AM00},\cite{For01},\cite{KFL01},\cite{wiberg}. When the underlying graph 
is a tree, the sum-product algorithm provides an exact implementation of 
maximum-likelihood (ML) decoding. Even when the underlying graph contains
cycles, empirical evidence suggests that, in many cases, the sum-product
algorithm continues to be a good approximation to ML decoding. 

The computational complexity of the sum-product algorithm associated
with a graphical realization of a code is largely determined by the sizes
of the local constraint codes in the realization. 
In Section~\ref{complexity_section} of this paper, we define 
various measures of ``constraint complexity'' of a graphical realization 
that can be used as estimates of the computational complexity of 
sum-product decoding. These complexity measures may be viewed as 
generalizations of previously proposed measures of trellis complexity 
\cite{vardy},\cite{KV03}, and tree complexity \cite{For03},\cite{halford}. 
However, for the most part, we focus on the \emph{$\k$-complexity} of 
a graphical realization, which we define to be the maximum of 
the dimensions of the local constraint codes in the realization. 

In the restricted context of tree realizations, it has previously 
been established that certain ``minimal'' tree realizations can 
be canonically defined. Let the term \emph{tree decomposition}
denote a graph decomposition in which the graph is a tree. 
It is known that among all tree realizations of a code $\cC$ 
that extend a given tree decomposition, there is one 
that minimizes the dimension of the state space at each edge
of the underlying tree, and this minimal tree realization
is unique \cite{For01}. It has further been shown \cite{K2}
that this unique minimal tree realization also minimizes 
(among all tree realizations extending the given tree decompositions)
the dimension of the local constraint code at each vertex of the tree.
In particular, it has the least $\k$-complexity among all such 
tree realizations. 

In contrast, there is very little known about the general case 
of realizations of a code on an arbitrary (not necessarily cycle-free) 
graph. For instance, there appear to be no ``canonical'' minimal realizations 
that can be defined in this situation. The only systematic study
in this direction remains that of Koetter and Vardy \cite{KV02},\cite{KV03}, 
who studied minimal tailbiting trellis representations of codes,
which, as already mentioned, are graphical realizations in which
the underlying graph consists of exactly one cycle. 
Beyond this basic (though by no means easy) case, there is little 
of interest in the literature on the complexity of realizations
of codes on arbitrary graphs, the notable exception to this being the
work of Halford and Chugg \cite{halford}. 

In their work, Halford and Chugg lay the foundations for a 
systematic study of complexity of graphical realizations. 
Their main result is the ``Forest-Inducing Cut-Set Bound'',
which gives a lower bound on the constraint complexity of a 
graphical realization in terms of its minimal tree complexity.
However, this bound does not appear be user-friendly in practice.
The main limitation of their approach is that they rely on the 
Edge-Cut Bound of Wiberg \emph{et al.} \cite{wiberg},\cite{WLK95}
to derive their results. While the Edge-Cut Bound has been put to good use 
in the study of ``state complexity'' of graphical realizations 
\cite{CFV99},\cite{For01}, it is of limited value in the analysis
of constraint complexity.

The main aim of our paper is to present useful and tight 
lower bounds on the constraint complexity of a graphical realization.
Our bounds also provide considerable insight into the problem of finding 
low-complexity graphical realizations. The fundamental tool in our analysis
is the Vertex-Cut Bound, which we state and prove in 
Section~\ref{cutset_bnd_section}. The Vertex-Cut Bound is
a natural analogue of the Edge-Cut Bound, but as we shall see,
it is more suitable for use in the analysis of constraint complexity.

In Section~\ref{vctree_section}, we define a data structure called
\emph{vertex-cut tree} that stores the information necessary about
a graph to effectively apply the Vertex-Cut Bound. Vertex-cut trees
are similar in structure to the junction trees associated with 
belief propagation algorithms \cite{jensen},\cite{AM00}. 
The \emph{vc-width} of a vertex-cut tree is a measure of the size 
of the vertex-cut tree, and the \emph{vc-treewidth} of a graph 
is the least vc-width among all its vertex-cut trees. 
The vc-treewidth of a graph is very closely related to
the notion of treewidth of graphs much studied in graph theory 
\cite{RS-I},\cite{bod93}. 

Using the Vertex-Cut Bound and the notion of vertex-cut trees,
we derive, in Section~\ref{lobnds_section}, a suite of lower bounds
on the $\k$-complexity of graphical realizations of a linear code $\cC$.
We state one of these bounds here as an illustrative example.
Let $\k_\tree(\cC)$ be the least $\k$-complexity among all 
tree realizations of $\cC$. Consider an arbitrary graph $\cG$, 
and let $\k_\vctree(\cG)$ denote its vc-treewidth. Then, 
the $\k$-complexity of any realization of $\cC$ on $\cG$ is
bounded from below by the ratio $\k_\tree(\cC)/\k_\vctree(\cG)$.

We further apply our methods to answer certain questions raised 
in \cite{K2}. Borrowing terminology from \cite{vardy}, for a code $\cC$,
let $b(\cC)$ denote the least edge-complexity of any trellis 
representation of $\cC$ or any of its coordinate permutations. 
In the language of our paper, $b(\cC)$ is the least $\k$-complexity
of any conventional trellis realization of $\cC$.
We show that for any linear code of length $n$, 
we have $b(\cC)/\k_\tree(\cC) = O(\log_2 n)$, and that this is the 
best possible estimate of the ratio, up to the constant implicit 
in the $O$-notation. This is used to extend a known lower bound 
\cite{LV95} on $b(\cC)$ in terms of the length $n$, dimension $k$ and 
minimum distance $d$ of $\cC$, to a lower bound on $\k_\tree(\cC)$.

Our lower bound on $\k_\tree(\cC)$ has an important implication. 
It shows that if $\mfC$ is a code family with the property that,
for each $\cC \in \mfC$, $\k_\tree(\cC)$ is bounded from above 
by a fixed constant, then either the dimension or the minimum distance 
of the codes in $\mfC$ grows sub-linearly with codelength. 
Thus, such code families are not good from a coding-theoretic perspective. 
We also prove a slightly more general result, which can be 
roughly interpreted as saying that a good error-correcting code 
cannot have a low-complexity realization on a graph with small vc-treewidth. 
So, for good codes, if low-complexity graphical realizations exist, 
then they must necessarily exist on graphs with large vc-treewidth.

Some concluding remarks are made in Section~\ref{conclusion}, and 
an example in support of a statement in Section~\ref{complexity_section}
is given in an appendix.

\section{Background and Notation\label{background_section}}

In this section, we provide the necessary background, and define the 
notation we use in the paper. We take $\F$ to be an arbitrary finite field. 
Given a finite index set $I$, we have the vector space 
$\F^I = \{\x = (x_i \in \F,\ i \in I)\}$. For $\x \in \F^I$ 
and $J \subseteq I$, the notation ${\x|}_J$ will denote the 
\emph{projection} $(x_i,\ i \in J)$. Also, for $J \subseteq I$, 
we will find it convenient to reserve the use of $\bar{J}$ 
to denote the set $\{i \in I:\ i \notin J\}$.

\subsection{Codes\label{code_section}}
A \emph{linear code} over $\F$, defined on the index set $I$, is a 
subspace $\cC \subseteq \F^I$. In this paper, the terms 
``code'' and ``linear code'' will be used interchangeably to mean
a linear code over an arbitrary finite field $\F$, unless explicitly
specified otherwise. The dimension, over $\F$, of $\cC$ will be denoted 
by $\dim(\cC)$. An $[n,k]$ code is a code of length $n$ and dimension $k$. 
If, additionally, the code has minimum distance $d$,
then the code is an $[n,k,d]$ code. 

Let $J$ be a subset of the index set $I$. The \emph{projection} of $\cC$ 
onto $J$ is the code $\cC|_J = \{{\c|}_J:\ \c \in \cC\}$, which is a 
subspace of $\F^J$. We will use $\cC_J$ to denote the \emph{cross-section}
of $\cC$ consisting of all projections $\c|_J$ of codewords $\c \in \cC$ 
that satisfy ${\c|}_{\bar{J}} = \0$. To be precise, 
$\cC_J = \{\c|_J:\ \c \in \cC, {\c|}_{\bar{J}} = \0\}$. Note that
$\cC_J \subseteq \cC|_J$. Also, since $\cC_J$ is isomorphic to
the kernel of the projection map $\pi: \cC \rightarrow {\cC|}_\oJ$
defined by $\pi(\c) = {\c|}_\oJ$, we have that 
$\dim(\cC_J) = \dim(\cC) - \dim({\cC|}_\oJ)$. As a consequence,
we see that if $J \subseteq K \subseteq I$, then 
$\dim(\cC_J) \leq \dim(\cC_K)$.

If $\cC_1$ and $\cC_2$ are codes over $\F$ defined on mutually disjoint 
index sets $I_1$ and $I_2$, respectively, then their \emph{direct sum} is the
code $\cC = \cC_1 \oplus \cC_2$ defined on the index set $I_1 \cup I_2$,
such that $\cC_{I_1} = {\cC|}_{I_1} = \cC_1$ and 
$\cC_{I_2} = {\cC|}_{I_2} = \cC_2$. This definition naturally extends
to multiple codes (or subspaces) $\cC_\alpha$, where $\alpha$ is a
code identifier that takes values in some set $A$. Again, it must be assumed
that the codes $\cC_\alpha$ are defined on mutually disjoint index sets
$I_\alpha,\ \alpha \in A$. The direct sum in this situation
is denoted by $\bigoplus_{\alpha \in A} \cC_\alpha$. 

\subsection{Graphs\label{graph_section}}

In this paper, we are primarily interested in graphs that are connected,
so any unqualifed use of the term ``graph'' should be taken to 
mean ``connected graph''. Let $\cG = (V,E)$ be a graph, 
where $V$ and $E$ denote its vertex and edge sets, respectively. 
To resolve ambiguity, we will sometimes denote the vertex and 
edge sets of $\cG$ by $V(\cG)$ and $E(\cG)$, respectively. 
Given a $v \in V$, the set of edges incident with $v$ will 
be denoted by $E(v)$.

For $X \subseteq E$, we define $\cG \del X$ to be the subgraph 
of $\cG$ obtained by deleting all the edges in $X$. If $X$ consists 
of a single edge $e$, then we will write $\cG \del e$ instead of 
$G \del \{e\}$. If $\cG \del X$ is disconnected, then $X$ is called 
an \emph{edge cut} of $\cG$. Similarly, for $W \subseteq V$, 
we define $\cG - W$ to be the subgraph of $\cG$ obtained by 
deleting all the vertices in $W$ along with all incident edges.
If $W$ consists of a single vertex $v$, then we will write $\cG - v$ 
instead of $G - \{v\}$. If $\cG - W$ is disconnected, 
then $W$ is called a \emph{vertex cut} of $\cG$.

A tree is a connected graph without cycles. 
Vertices of degree one in a tree are called \emph{leaves}, 
and all other vertices are called \emph{internal nodes}.
Note that any subset of the edges of a tree constitutes an edge cut,
and any subset of internal nodes constitutes a vertex cut of the tree.
If $e$ is an edge in a tree $T$, then we will denote by $T^{(e)}$
and $\oT^{(e)}$ the two components of $T \del e$. If $v$ is a vertex
of degree $\d$ in a tree $T$, then we will use $T_1^{(v)}, 
T_2^{(v)}, \ldots, T_{\d}^{(v)}$ to denote the components of 
$T - v$.

A \emph{path} is a tree with exactly two leaves 
(the end-points of the path). All internal nodes in a 
path have degree two. A \emph{simple cycle} is a 
connected graph in which all vertices have degree two.
An \emph{$n$-cycle}, $n \geq 3$, is a simple cycle with $n$ vertices.

\subsection{Graphical Realizations of Codes\label{realize_section}}


\begin{figure}
\epsfig{file=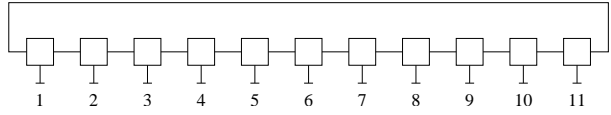, width=8cm}
\caption{A graph decomposition of $I = \{1,2,3,\ldots,11\}$.}
\label{11-cycle}
\end{figure}


The development in this section is based on the exposition of 
Forney \cite{For01},\cite{For03}; see also \cite{halford},\cite{K2}.
Let $I$ be a finite index set. A \emph{graph decomposition} of $I$ 
is a pair $(\cG,\omega)$, where $\cG = (V,E)$ is a graph, and 
$\omega: I \rightarrow V$ is an \emph{index mapping}. 
For a code $\cC$, we will usually write ``graph decomposition of $\cC$'' 
as shorthand for ``graph decomposition of the index set of $\cC$''. 
We again wish to emphasize that, unless explicitly stated otherwise, 
we will take $\cG$ to be a connected graph.
When $\cG$ is a tree, $(\cG, \omega)$ will be called a 
\emph{tree decomposition}. Pictorially, a graph decomposition $(\cG,\omega)$ 
is depicted as a graph with an additional feature: at each vertex $v$ 
such that  $\omega^{-1}(v)$ is non-empty, we attach special ``half-edges'', 
one for each index in $\omega^{-1}(v)$. Figure~\ref{11-cycle} 
depicts a graph decomposition of $I = \{1,2,3,\ldots,11\}$.


For a graph $\cG = (V,E)$, recall that $E(v)$, $v \in V$, 
denotes the set of edges incident with $v$ in $\cG$. 
Consider a tuple of the form\footnote{In referring to a tuple of 
the form $(\cG,\omega,(\cS_e,\ e \in E), (C_v,\ v \in V))$, we will
implicitly assume that $V$ and $E$ denote the vertex and edge sets,
respectively, of the graph $\cG$.}
$(\cG,\omega,(\cS_e,\ e \in E), (C_v,\ v \in V))$, where
\begin{itemize}
\item $(\cG,\omega)$ is a graph decomposition of $I$;
\item for each $e \in E$, $\cS_e$ is a vector space over $\F$ called
a \emph{state space};
\item for each $v \in V$, $C_v$ is a subspace of 
$\F^{\omega^{-1}(v)} \oplus\, \left(\bigoplus_{e \in E(v)} \cS_e\right)$,
called a \emph{local constraint code}, or simply, a \emph{local constraint}.
\end{itemize}
Such a tuple will be called a \emph{graphical model}. A graphical model
in which the underlying graph $\cG$ is a tree will be called a 
\emph{tree model}. The elements of any state space $\cS_e$ are called 
\emph{states}. The index sets of the state spaces $\cS_e$, $e \in E$, 
are taken to be mutually disjoint, and are also taken to be disjoint 
from the index set $I$ corresponding to the symbol variables. 

A \emph{global configuration} of a graphical model as above is
an assignment of values to each of the symbol and state variables.
In other words, it is a vector of the form 
$((x_i \in \F,\ i \in I), (\s_e \in \cS_e,\ e \in E))$.
A global configuration is said to be \emph{valid} if it satisfies all
the local constraints. Thus, 
$((x_i \in \F,\ i \in I), (\s_e \in \cS_e,\ e \in E))$ is a valid
global configuration if for each $v \in V$, $((x_i,\ i \in \omega^{-1}(v)),
(\s_e,\ e \in E(v))) \in C_v$. The set of all valid global configurations
of a graphical model is called the \emph{full behavior} of the model.

Note that the full behavior is a subspace 
$\mfB \subseteq \F^I \oplus\, \left(\bigoplus_{e \in E} \cS_e\right)$.
As usual, for $J \subseteq I$, ${\mfB|}_J$ denotes the projection of 
$\mfB$ onto the index set $J$. For future convenience, we also
define certain other projections of $\mfB$.
Let $\b = ((x_i,\, i \in I),\ (\s_e,\, e \in E))$
be a global configuration in $\mfB$. At any given $v \in V$,
the \emph{local configuration} of $\b$ at $v$ is defined as
$$
{\b|}_v = ((x_i,\, i \in \omega^{-1}(v)),\ (\s_e,\, e \in E(v))).
$$
The set of all local configurations of $\mfB$ at $v$ is then defined as
${\mfB|}_v = \{{\b|}_v:\, \b \in \mfB\}$. By definition, 
${\mfB|}_v \subseteq C_v$. Similarly, given any $e \in E$, 
if $\b$ is a global configuration as above, then we define ${\b |}_e =\s_e$;
we further define ${\mfB |}_e =\{{\b|}_e:\, \b \in \mfB\}$. 
Clearly, ${\mfB|}_e$ is a subspace of $\cS_e$. 

A graphical model $(\cG,\omega,(\cS_e,\ e \in E), (C_v,\ v \in V))$ is 
defined to be \emph{essential} if ${\mfB|}_v = C_v$ for all $v \in V$
and ${\mfB|}_e = \cS_e$ for all $e \in E$. When $\cG$ is a tree, there
is some redundancy in the above definition, as the
condition ${\mfB|}_e = \cS_e$ for all $e \in E$ actually implies
that ${\mfB|}_v = C_v$ for all $v \in V$ \cite[Lemma~2.2]{K2}.
It is worth noting that, in an essential graphical model,
at any edge $e = \{u,v\}$, the state space $\cS_e$, may be viewed 
as a projection of each of the local constraint codes $C_u$ and $C_v$.
Thus, for any $e \in E$, $\dim(\cS_e) \leq \dim(C_v)$, 
where $v$ is any vertex incident with $e$.
 
An arbitrary graphical model $\G$ with full behavior $\mfB$ can 
always be ``essentialized'' by simply replacing each 
local constraint $C_v$ in $\G$ with the projection ${\mfB|}_v$, 
and replacing each state space $\cS_e$ with the projection ${\mfB|}_e$. 
The resulting ``essentialization'' of $\G$ still has full behavior $\mfB$.

An essential graphical model $(\cG,\omega,(\cS_e,\ e \in E), (C_v,\ v \in V))$ 
is defined to be a \emph{graphical realization} of a code $\cC$, 
or simply a \emph{realization of $\cC$ on $\cG$}, if ${\mfB|}_I = \cC$.
A graphical realization of $\cC$ in which the underlying graph 
$\cG$ is a tree, is called a \emph{tree realization} of $\cC$. 
Our definition of a graphical (and tree) realization differs slightly
from the prior definitions in 
\cite{For01},\cite{For03},\cite{halford},\cite{K2}, 
in that we require the underlying graphical model to be essential. 
As explained above, any graph model can be essentialized, so there is no loss
of generality in this definition.

A graphical (resp.\ tree) realization 
$(\cG,\omega,(\cS_e,\ e \in E), (C_v,\ v \in V))$ of $\cC$
is said to \emph{extend}, or be an extension of, the graph 
(resp.\ tree) decomposition $(\cG,\omega)$ of $\cC$. We will
denote by $\mfR(\cC; \cG,\omega)$ the set of all graphical realizations
of $\cC$ that extend the graph decomposition $(\cG,\omega)$ of $\cC$.

Now, it is an easily verifiable fact that any tree decomposition
of a code can always be extended to a tree realization
of the code \cite{For03},\cite{K2}. Such an extension is not
unique in general, but we will describe a canonical ``minimal'' extension
a little later. More generally, any graph decomposition of a code $\cC$
can always be extended to a graphical realization of $\cC$, 
as we now explain.
Let $\cG = (V,E)$ be a connected graph, and suppose 
that $(\cG,\omega)$ is a graph decomposition of $\cC$. 
Take $T$ to be any spanning tree of $\cG$,
and let $E_T$ denote its edge set. Then, $(T,\omega)$ is a tree 
decomposition of $\cC$. As noted above, this tree decomposition can be 
extended to a tree realization  
$(T,\omega,(\cS_e,\ e \in E_T), (C_v,\ v \in V))$ of $\cC$. We further
extend this to a realization of $\cC$ on $\cG$ as follows. Define
the state spaces $\bar{\cS}_e$, $e \in E$, as
$$
\overline{\cS}_e = 
\begin{cases}
\cS_e & \text{if $e \in E_T$} \\
\{0\} & \text{if $e \in E \setminus E_T$},
\end{cases}
$$
and for each $v \in V$, define the local constraint 
$\bar{C}_v = C_v \oplus (\bigoplus_{e \in E(v) \setminus E_T} \{0\})$.
It should be clear that 
$(\cG,\omega,(\bar{\cS}_e,\, e \in E),\, (\bar{C}_v,\, v \in V))$
is a graphical realization of $\cC$.

Graphical realizations of codes in which the underlying graph is a 
path or a simple cycle have received considerable prior
attention in the literature. Such realizations were called 
``conventional state realizations'' 
(when the underlying graph is a path)
and ``tail-biting state realizations'' 
(when the underlying graph is a simple cycle) in \cite{For01}. 
We will call them ``trellis realizations''.  
Briefly, a \emph{trellis realization} of a code $\cC$ 
(defined on the index set $I$) is any extension of a 
graph decomposition of $\cC$ of the form $(\cG,\omega)$, 
where $\cG$ is either a path or a simple cycle, and $\omega$
is a surjective map $\omega:I \rightarrow V(\cG)$. 
A trellis realization in which the surjective map 
$\omega: I \rightarrow V(\cG)$ is not injective 
(so that $\omega$ is not a bijection), is usually
called a \emph{sectionalized} trellis realization. 
A \emph{conventional} trellis realization is one in which
the underlying graph $\cG$ is a path. When the 
underlying graph $\cG$ is a simple cycle, the trellis 
realization is said to be \emph{tailbiting}. The theory
of conventional trellis realizations is well established; 
see, for example, \cite{vardy}. On the other hand, 
tailbiting trellis realizations are less well understood;
the principal systematic study of these remains that of 
Koetter and Vardy \cite{KV02},\cite{KV03}. We remark that our 
requirement that graphical realizations have underlying
graph models that are essential corresponds to the requirement 
in \cite{KV02},\cite{KV03} that ``linear trellises'' be ``reduced''.

\section{Complexity Measures for Graphical 
Realizations\label{complexity_section}}

As observed in \cite{For01}, any graphical realization of a code 
specifies a class of associated graph-based decoding algorithms, 
namely, the sum-product algorithm and its variants.
Thus, ideally, any definition of a complexity measure for a 
graphical realization should try to capture the computational complexity 
of the associated decoding algorithms. The analysis 
in \cite[Section~V]{For01} shows that the computational complexity
of the sum-product algorithm specified by a given graphical realization 
of a code is determined in large part by the cardinalities,
or equivalently dimensions, of the
local constraint codes in the realization. Thus, as a simple
measure of the complexity of a graphical realization, which 
roughly reflects the complexity of sum-product decoding, we will
consider the maximum of the dimensions 
of the local constraint codes in the realization.


Let $\G = (\cG,\omega,(C_v,v \in V),(\cS_e,e\in E))$ be a graphical
realization of a code $\cC$. The \emph{constraint max-complexity},
or simply \emph{$\k$-complexity}, of $\G$ is defined to be 
$\k(\G) = \max_{v \in V} \dim(C_v)$. 
Now, recall that if $(\cG,\omega)$ is a graph decomposition of $\cC$,
then $\mfR(\cC;\cG,\omega)$ denotes the set of all graphical
realizations of $\cC$ that extend $(\cG,\omega)$.
We further define 
\beq
\k(\cC;\cG,\omega) = \min_{\G \in \mfR(\cC;\cG,\omega)} \k(\G)
\label{kCG_def}
\eeq
We add another level of minimization by defining, for a given
graph $\cG$, the \emph{$\cG$-width} of a code $\cC$ to be 
\beq
\k(\cC;\cG) = \min_{\omega} \k(\cC;\cG,\omega),
\label{kG_def}
\eeq
where the minimum is taken over all possible index mappings 
$\omega: I \to V(\cG)$, where $I$ is the index set of $\cC$. Thus,
the $\cG$-width of $\cC$ is the least $\k$-complexity of any
realization of $\cC$ on $\cG$, and may be taken to be a measure
of the least computational complexity of any sum-product-type decoding
algorithm for $\cC$ implemented on the graph $\cG$.  

A broader optimization problem of considerable interest is the following:
given a code $\cC$ and a family of graphs $\mfG$, identify a
$\cG \in \mfG$ on which $\cC$ can be realized with 
the least possible $\k$-complexity. We thus define 
\beq
\k(\cC;\mfG) = \min_{\cG \in \mfG} \k(\cC;\cG).
\label{kmfG_def}
\eeq
Two special cases of this definition --- 
treewidth and pathwidth (or trellis-width) --- 
are particularly of interest. We define treewidth first, and 
pathwidth a little further below. 

If we let $\mfT$ denote the set of all trees, then $\k(\cC;\mfT)$ is 
called the \emph{treewidth} of the code $\cC$ \cite{K2}, which we will denote 
by $\k_\text{tree}(\cC)$. The notion of treewidth 
(\emph{i.e.}, minimal $\k$-complexity among tree realizations) 
of a code was first considered by Forney \cite{For03},
and an analogous notion has been defined for matroids in \cite{HW06}.
The arguments in \cite[Section~V]{For03} (and also in \cite{HW06})
show that $\k_\text{tree}(\cC)$ can always be obtained by minimizing 
$\k(\cC;T,\omega)$ over tree decompositions $(T,\omega)$ 
in which $T$ is a cubic tree (\emph{i.e.}, a tree in which 
all internal nodes have degree 3), and $\omega$ is a bijection 
between the index set of $\cC$ and the set of leaves of $T$.

A complexity measure related to treewidth, termed minimal tree complexity, 
was defined and studied by Halford and Chugg \cite{halford}. 
Treewidth, as we have defined above, is an upper bound on the 
minimal tree complexity of Halford and Chugg.

The \emph{pathwidth}, $\k_{\text{path}}(\cC)$, of a code $\cC$ 
is defined to be the quantity $\k(\cC;\mfP)$, where $\mfP$ denotes 
the sub-family of $\mfT$ consisting of all paths. We will find it convenient
to refer to tree decompositions $(P,\omega)$, with $P \in \mfP$,
as \emph{path decompositions}. Thus, $\k(\cC;\mfP)$ is the minimum
value of $\k(\cC;P,\omega)$ as $(P,\omega)$ ranges over all
path decompositions of $\cC$. In fact, by the argument of 
\cite[Section~V.B]{For03}, the minimizing path decomposition $(P,\omega)$
may be taken to be one in which the index mapping $\omega$ is surjective.
Thus, $\k_{\text{path}}(\cC)$ is the least $\k$-complexity
of any conventional trellis realization of $\cC$, and so
we may also call it the \emph{(conventional) trellis-width}\footnote{For
this reason, what we have called $\k_{\text{path}}(\cC)$ here was called 
$\k_{\text{trellis}}(\cC)$ in \cite{K2}.} of $\cC$.
It is also known that sectionalization cannot reduce 
the $\k$-complexity\footnote{Our notion of $\k$-complexity
corresponds to the notion of ``edge-complexity'' in \cite{vardy}.}
of a trellis realization \cite[Theorem~6.3]{vardy}, and
hence, $\k_{\text{path}}(\cC)$ is the minimum
value of $\k(\cC;P,\omega)$ over all path decompositions $(P,\omega)$
in which the index mapping $\omega$ is a bijection between the index
set of $\cC$ and the vertices of $P$.

Measures of constraint complexity other than $\k$-complexity have been
proposed in the previous literature, especially in the context
of trellis realizations \cite{vardy},\cite{KV03}. 
The \emph{$\k^+$-complexity} of a graphical realization 
$\G = (\cG,\omega,(C_v,v \in V),(\cS_e,e\in E))$ is defined to be 
$\k^+(\G) = \sum_{v \in V} \dim(C_v)$. Note that 
$\frac{1}{|V|} \, \k^+(\G)$ is the average local constraint code
dimension in $\G$. On the other hand, it has been suggested 
\cite{For03} that the sum of the constraint code cardinalities
``may be a better guide to decoding complexity'' than 
$\k$-complexity or $\k^+$-complexity. Thus, we define
$\k^\tot(\G) = \sum_{v \in V} |C_v| = \sum_{v \in V} |\F|^{\dim(C_v)}$.
Analogous to (\ref{kCG_def})--(\ref{kmfG_def}), we may define
$\k^+(\cC,\cG,\omega)$, $\k^\tot(\cC;\cG,\omega)$, etc., 
but we will only touch upon these briefly in this paper.

We remark that while we have used constraint code dimensions
to define our complexity measures for graphical realizations, 
one could also define measures of complexity based on state-space 
dimensions. For example, we could define the state max-complexity,
$\sigma(\G)$, of a graphical realization $\G$ to be the 
maximum of the dimensions of the state spaces in $\G$. 
Similarly, we may consider the complexity measures $\sigma^+(\G)$
and $\sigma^\tot(\G)$ analogous to $\k^+(\G)$ and
$\k^\tot(\G)$. These measures are especially relevant and 
have been well studied in the context of trellis realizations; 
again, see \cite{vardy},\cite{KV03}. However, as noted by 
Forney \cite{For03}, measures of state-space complexity become 
less appropriate in the context of realizations on arbitrary graphs 
or trees. For example, for any code $\cC \subseteq \F^n$, 
one can always find a tree on which $\cC$ can be realized in such 
a way that all state-spaces have dimension at most 1. This would be the 
``star-shaped'' tree $T$ consisting of $n$ leaves connected
to a single internal node $v$ of degree $n$. Take $\omega$ to be
any bijection between the index set of $\cC$ and the leaves of $T$;
set $\cS_e = \F$ at each edge $e$ of $T$; and finally, 
take the local constraint code $C_v$ at the internal node to be $\cC$ 
itself, and take the local constraint codes at the leaves 
to be $[2,1]$ repetition codes. Clearly, the resulting 
tree model (after essentialization) is a tree realization of $\cC$.
Thus, it makes little sense to define a state-space complexity measure 
analogous to treewidth, unless we restrict the kind of trees on which we 
are allowed to realize the given code\footnote{We do get a reasonable
state-space analogue to treewidth if we restrict the class
of trees over which we attempt to minimize state-space complexity
to the class of cubic trees only; see \cite{K2}.}.

The astute reader may point out that in the trivial tree realization
above, the sum of the state-space dimensions is non-trivial, 
and so a state-space analogue to treewidth could potentially be
defined in terms of $\sigma^+$ or $\sigma^\tot$. This may be
true, but we do not pursue this further, since, as already observed
previously, complexity measures based on constraint code dimensions
are a better guide to decoding complexity. But, while on this topic,
we mention in passing that for any \emph{tree} realization $\G$
of a code $\cC$, it turns out that 
\beq
\k^+(\G) = \dim(\cC) + \sigma^+(\G).
\label{dim_sum_eq}
\eeq
Thus, the problem of minimizing $\sigma^+(\G)$ among
tree realizations $\G$ of a given code $\cC$ is equivalent to the 
problem of minimizing $\k^+(\G)$. The identity
in (\ref{dim_sum_eq}), which may be viewed as a generalization 
of the statement of Theorem~4.6 in \cite{KV03} 
for conventional trellis realizations, will not be proved here as it
would be an unnecessary deviation from the main line of our development.
It suffices to say that (\ref{dim_sum_eq}) follows from 
Theorem~3.4 in \cite{K2} by first verifying that it indeed holds
for any minimal tree realization $\cM(\cC;T,\omega)$, 
and then observing that the difference between 
$\k^+(\G)$ and $\sigma^+(\G)$ is preserved by
the state-merging process mentioned in the statement of that theorem.

Finally, we remark that the state max-complexity of
a graphical realization cannot exceed the constraint max-complexity
of the realization. This is because, as observed in 
Section~\ref{realize_section}, in any essential 
graphical model $(\cG,\omega,(\cS_e, e \in E), (C_v, v \in V))$, 
for each edge $e \in E$, we have $\dim(\cS_e) \leq \dim(C_v)$, where
$v$ is any vertex incident with $e$.

\subsection{Minimal Realizations\label{minimal_section}}

Given a code $\cC$ and a tree decomposition $(T,\omega)$ of $\cC$,
there exists a tree realization, 
$(T,\omega,(\cS_e^*,\ e \in E), (C_v^*,\ v \in V))$, of $\cC$ 
with the following property \cite{For01},\cite{For03}: 
\begin{quote}
if $(T,\omega,(\cS_e,\ e \in E), (C_v,\ v \in V))$ is
a tree realization of $\cC$ that extends $(T,\omega)$, then 
for all $e \in E$, $\dim(\cS_e^*) \leq \dim(\cS_e)$.
\end{quote}
This \emph{minimal} tree realization, which we henceforth denote by 
$\cM(\cC; T,\omega)$, is unique up to isomorphism\footnote{Graphical
realizations $(\cG,\omega,(C_v,v\in V),(\cS_e,e\in E))$ and 
$(\cG,\omega,(C_v',v\in V),(\cS_e',e\in E))$ of a code $\cC$ 
are said to be \emph{isomorphic} if, for each $v \in V$, $C_v$ and $C_v'$ 
are isomorphic as vector spaces, and for each $e \in E$, $\cS_e$ and
$\cS_e'$ are isomorphic as vector spaces. We do not distinguish 
between isomorphic graphical realizations.}. Constructions of 
$\cM(\cC;T,\omega)$ can be found in \cite{For01},\cite{For03},\cite{K2}.

It has further been shown \cite{K2} that not only does $\cM(\cC;T,\omega)$ 
minimize (among realizations in $\mfR(\cC;T,\omega)$) 
the state space dimension at each edge of $T$, but it also 
minimizes the local constraint code dimension at each vertex of $T$. 
More precisely, $\cM(\cC;T,\omega)$ also has the following property:
\begin{quote}
if $(T,\omega,(\cS_e,\ e \in E), (C_v,\ v \in V))$ is
a tree realization of $\cC$ that extends $(T,\omega)$, then 
for all $v \in V$, $\dim(C_v^*) \leq \dim(C_v)$.
\end{quote}
Consequently, we have that $\k(\cC;T,\omega) = \k(\cM(\cC;T,\omega))$,
$\k^+(\cC;T,\omega) = \k^+(\cM(\cC;T,\omega))$, and 
$\k^\tot(\cC;T,\omega) = \k^\tot(\cM(\cC;T,\omega))$.
The fact that $\cM(\cC;T,\omega)$ minimizes local constraint code dimension
at each vertex of $T$ will be central to the 
derivation of our results in the sections to follow.

We will henceforth consistently use the notation $\cS_e^*$ and $C_v^*$
to denote state spaces and local constraint codes in a minimal tree
realization $\cM(\cC;T,\omega)$. Exact expressions for the dimensions 
of $\cS_e^*$ and $C_v^*$ in $\cM(\cC;T,\omega)$ are known 
\cite{For01},\cite{For03}. Recall that for an edge $e$ of $T$, we denote by
$T^{(e)}$ and $\oT^{(e)}$ the two components of $T \del e$. 
Let us further define $J(e) = \omega^{-1}(V(T^{(e)}))$ and 
$\oJ(e) = \omega^{-1}(V(\oT^{(e)}))$. We then have 
\beq
\dim(\cS_e^*) = \dim(\cC) - \dim(\cC_{J(e)}) - \dim(\cC_{\bar{J}(e)}).
\label{dimSe*}
\eeq
%
Next, consider any vertex $v$ in $T$. If $v$ has degree $\d$, then 
$T - v$ has components $T_i^{(v)}$, $i = 1,2,\ldots,\d$. Define
$J_i =  \omega^{-1}(V(T_i^{(v)}))$, for $i = 1,2,\ldots,\d$.
Then \cite[Theorem 1]{For03},
\beq
\dim(C_v^*) = \dim(\cC) - \sum_{i=1}^\d \dim(\cC_{J_i}).
\label{dimCv*}
\eeq

In summary, the minimal tree realization $\cM(\cC;T,\omega)$ 
is an exact solution to the problem of determining the 
minimum-complexity extension of a tree decomposition $(T,\omega)$ 
of a code $\cC$. Moreover, $\cM(\cC;T,\omega)$ minimizes,
among realizations in $\mfR(\cC;T,\omega)$, 
any reasonable measure of complexity, be it state-space complexity 
or constraint complexity. Unfortunately, when we move to 
realizations on graphs with cycles, there appear to be 
no ``canonical'' minimal realizations with properties 
similar to those of minimal tree realizations. In fact, 
if $(\cG,\omega)$ is a graph decomposition of a code $\cC$,
where $\cG$ is a graph with cycles, there need not even be a realization $\G$ 
that simultaneously achieves $\min_{\G \in \mfR(\cC;\cG,\omega)} \k(\G)$
and $\min_{\G \in \mfR(\cC;\cG,\omega)} \k^+(\G)$. An example of
such a graph decomposition is given in Appendix~\ref{example_app}.


Thus, given a code $\cC$ and a graph $\cG$ containing cycles, 
the problem of finding realizations of $\cC$ on $\cG$ with the
least possible $\k$-complexity, $\k^+$-complexity, or 
$\k^\tot$-complexity
(within some interesting sub-class of realizations of $\cC$ on $\cG$) 
is much harder to solve than the corresponding problem for cycle-free graphs.
In the next section, we present a simple but valuable tool that will enable
us to derive non-trivial lower bounds on the constraint
complexity of realizations of a code on an arbitrary graph. These bounds 
could be used, for example, to determine whether or not the complexity
of a given realization is close to the least possible.

\section{Cut-Set Bounds\label{cutset_bnd_section}}

Let $\cG = (V,E)$ be a connected graph. A partition 
$(V',V'')$ of $V$ is said to be \emph{separated} by an 
edge cut $X \subseteq E$ if, for each pair of vertices
$v' \in V'$ and $v'' \in V''$, any path in $\cG$ that 
joins $v'$ to $v''$ passes through some edge $e \in X$.
We remark that if $\cG \setminus X$ has more than two 
components, then there is more than one partition of $V$ that
is separated by $X$.

The Edge-Cut Bound, stated below, is a result of
fundamental importance in the study of graphical realizations. 
This result was originally observed by Wiberg, Loeliger and Koetter
\cite{wiberg},\cite{WLK95}, but the version we give here is
due to Forney \cite[Corollary~4.4]{For01}.

\begin{theorem}[The Edge-Cut Bound]
Let $\G = (\cG,\omega,(C_v,v\in V),(\cS_e,e\in E))$ be 
a realization of a code $\cC$ on a connected graph $\cG = (V,E)$. 
If $(V',V'')$ is a partition of $V$ separated by an edge cut $X \subseteq E$,
then, defining $J' = \omega^{-1}(V')$ and $J'' = \omega^{-1}(V'')$,
we have
$$
\sum_{e \in X}\dim(\cS_e) \geq \dim(\cC) - \dim(\cC_{J'}) - \dim(\cC_{J''}).
$$
\label{edge-cut_bound}
\end{theorem}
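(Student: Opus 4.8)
The plan is to prove the Edge-Cut Bound by constructing, from an arbitrary realization $\G$ on $\cG$, a two-vertex (or path-like) realization that ``collapses'' the graph across the cut, and then applying the known dimension formula for minimal tree realizations --- specifically, equation~(\ref{dimSe*}) --- together with the observation that minimality only decreases state-space dimensions. Concretely, given the edge cut $X$ separating $(V',V'')$, I would merge all vertices of $V'$ into a single super-vertex $u'$ and all vertices of $V''$ into a single super-vertex $u''$, so that the edges joining the two sides (a subset of $X$, or all of $X$ after discarding edges internal to $V'$ or $V''$) become the edges of a multigraph on two vertices. The symbol variables indexed by $J' = \omega^{-1}(V')$ get attached to $u'$ and those indexed by $J''$ to $u''$. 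The local constraint at $u'$ is obtained by taking the full behavior $\mfB$ of $\G$, projecting onto the symbol variables in $J'$ together with the state variables on all edges of $X$, and similarly for $u''$; because $\mfB$ restricted to $I$ is $\cC$, this yields a (not necessarily essential) graphical realization of $\cC$ on the two-vertex multigraph, with the same state spaces $\cS_e$, $e \in X$, on the connecting edges.

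Next I would reduce the two-vertex multigraph to a single edge by bundling all the parallel edges of $X$ between $u'$ and $u''$ into one edge carrying the product state space $\bigoplus_{e\in X}\cS_e$; this is a routine ``edge-merging'' step that does not change the full behavior and replaces $\sum_{e\in X}\dim(\cS_e)$ by the dimension of the bundled state space. What remains is a realization of $\cC$ on the single-edge tree with two leaves $u'$ (carrying the indices $J'$) and $u''$ (carrying the indices $J''$). For this tree decomposition, the minimal tree realization $\cM(\cC;T,\omega)$ has, by~(\ref{dimSe*}) applied to the unique edge $e$ with $J(e)=J'$ and $\bar J(e)=J''$, state-space dimension exactly $\dim(\cC)-\dim(\cC_{J'})-\dim(\cC_{J''})$. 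Since any realization extending this tree decomposition has state-space dimension at least that of the minimal one (the defining property of $\cM(\cC;T,\omega)$ recalled in Section~\ref{minimal_section}), we get $\dim\bigl(\bigoplus_{e\in X}\cS_e\bigr) = \sum_{e\in X}\dim(\cS_e) \geq \dim(\cC)-\dim(\cC_{J'})-\dim(\cC_{J''})$, which is exactly the claimed inequality.

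The step I expect to require the most care is verifying that the vertex-merging and edge-bundling operations genuinely produce a \emph{realization} of $\cC$ --- i.e., that the projection of the new full behavior onto the symbol index set $I$ is still exactly $\cC$, and not something larger or smaller. The ``$\subseteq \cC$'' direction is immediate since every valid configuration of the collapsed model lifts (after re-expanding the bundled state into its components) to a configuration satisfying all the original local constraints inside $V'$ and inside $V''$; the subtle direction is ``$\supseteq \cC$'', which needs the fact that any codeword $\c\in\cC$ extends to a global configuration of $\mfB$, whose restrictions to the two sides together with the shared boundary states form a valid configuration of the collapsed model. One must also check that discarding edges of $X$ lying entirely within $V'$ or within $V''$ is harmless --- such edges can only have their state spaces absorbed into the respective super-vertex constraints, which only helps the inequality since their dimensions are dropped from the left-hand sum. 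Alternatively, and perhaps more cleanly, one can avoid explicit merging by invoking the characterization of $\dim(\cS_e^*)$ directly: cut $\cG$ along $X$, observe that $\mfB|_{J'}$ together with the combined boundary state of $\G$ already furnishes a valid configuration of a two-section trellis, and apply the minimality bound. Either route reduces the theorem to~(\ref{dimSe*}) plus the monotonicity of state dimension under extension, so the real content is entirely in the bookkeeping of the cut-and-merge construction.
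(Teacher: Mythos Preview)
Your proof is correct, but note that the paper does not actually prove this theorem: it is stated as a known result and attributed to Wiberg--Loeliger--Koetter and to Forney \cite[Corollary~4.4]{For01}, with no proof given in the paper itself. That said, your collapse-to-a-tree-and-invoke-minimality strategy is precisely the template the paper does use to prove the companion result, the Vertex-Cut Bound (Theorem~\ref{vertex-cut_bound}): there the paper collapses $\cG$ along a star partition to a star-shaped tree and applies~(\ref{dimCv*}) at the central vertex, whereas you collapse along an edge cut to a single-edge tree and apply~(\ref{dimSe*}) at the unique edge. So your approach is not only valid but is the exact edge-analogue of the paper's own method for the vertex case. Your caveat about edges of $X$ lying entirely within $V'$ or $V''$ is handled correctly --- working only with the crossing edges $X'\subseteq X$ gives $\sum_{e\in X'}\dim(\cS_e)\geq \dim(\cC)-\dim(\cC_{J'})-\dim(\cC_{J''})$, and the full sum over $X$ can only be larger. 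One minor point: after bundling and collapsing you should essentialize before invoking the minimality property, since the paper's definition of ``realization'' requires essentiality; but essentialization can only shrink the bundled state space, so the inequality is preserved.
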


The edge-cut bound can be used to derive useful lower bounds on the 
state-space complexity of a graphical realization; see, for example, 
\cite{CFV99}. To deal with constraint complexity, however, we will need a 
closely-related bound that uses vertex cuts instead of edge cuts.

We introduce here some terminology that we will use to state our
vertex-cut bound. For $v \in V$, let $N(v) = \{u \in V: \{u,v\} \in E\}$ 
denote the set of neighbours of $v$ in $\cG$. Furthermore, 
for $W \subseteq V$, define $N(W) = \bigcup_{v \in W} N(v)$.

\begin{definition}
An ordered collection $(V_0,V_1,\ldots,V_\d)$, $\d \geq 0$, of subsets
of $V$ is said to be a \emph{star partition} of $V$, if the $V_i$'s form 
a partition of $V$ (\emph{i.e.}, the $V_i$'s are pairwise disjoint,
and $\bigcup_{i=0}^\d V_i = V$), and for each $i \in \{1,2,\ldots,\d\}$, 
we have $N(V_i) \subseteq V_i \cup V_0$.
\label{vcut_property_def}
\end{definition}

The definition has been worded so as to allow some of the 
$V_i$'s to be empty sets. When $V_i$ is non-empty for at most one 
$i \geq 1$, a star partition is simply a partition. 
When at least two $V_i$'s other than $V_0$ are non-empty, then 
a star partition is a partition that arises from a vertex cut of $\cG$,
as we now explain. For any $i > j \geq 1$, if $V_i$ and $V_j$ 
are both non-empty, then the above definition simply says that 
any path between a vertex in $V_i$ and a vertex in $V_j$ 
must pass through $V_0$. Thus, if at least two $V_i$'s other
than $V_0$ are non-empty, then $V_0$ is a vertex cut of $\cG$.
Conversely, if $V_0$ is a vertex cut of $\cG$, and 
$\cG_1,\cG_2,\ldots,\cG_\delta$, $\d \geq 2$, are the 
(non-empty) components of $\cG - V_0$, then, 
setting $V_i = V(\cG_i)$ for $i = 1,2,\ldots,\d$, we see
that $(V_0,V_1,\ldots,V_\d)$ is a star partition of $V$.
The graph on the left in Figure~\ref{vcut_bnd_fig}, which 
depicts a typical situation covered by the definition, should 
also explain the nomenclature.

\begin{theorem}[The Vertex-Cut Bound]
Let $\G = (\cG,\omega,(C_v,v\in V),(\cS_e,e\in E))$ be 
a realization of a code $\cC$ on a connected graph $\cG = (V,E)$. 
If $(V_0,V_1,\ldots,V_\d)$ is a star partition of $V$, then, 
defining $J_i = \omega^{-1}(V_i)$ for $i = 1,2,\ldots,\d$, we have
$$
\sum_{v \in V_0}\dim(C_v) \geq \dim(\cC) - \sum_{i=1}^\d \dim(\cC_{J_i}).
$$
\label{vertex-cut_bound}
\end{theorem}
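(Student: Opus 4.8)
The plan is to adapt Forney's proof of the Edge--Cut Bound, with the ``cut states'' $\bigoplus_{e\in X}\cS_e$ replaced by the ``cut local configurations'' $\bigoplus_{v\in V_0}C_v$. Write $\mfB$ for the full behavior of $\G$, let $\psi:\mfB\to\cC$ be the symbol projection $\b\mapsto{\b|}_I$ (surjective, since ${\mfB|}_I=\cC$), and let $\phi:\mfB\to\bigoplus_{v\in V_0}C_v$ be the map $\b\mapsto({\b|}_v,\ v\in V_0)$ recording all local configurations at $V_0$. Since each ${\b|}_v\in{\mfB|}_v\subseteq C_v$, we have $\dim(\mathrm{im}\,\phi)\le\sum_{v\in V_0}\dim(C_v)$. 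Set $J_0=\omega^{-1}(V_0)$, so that $I=J_0\disj J_1\disj\cdots\disj J_\d$. The hypothesis $N(V_i)\subseteq V_i\cup V_0$ for $i\ge1$ means there is no edge joining $V_i$ to $V_j$ for distinct $i,j\ge1$; hence every edge of $\cG$ lies inside $\cG[V_0]$, or inside some $\cG[V_i]$ ($i\ge1$), or joins $V_0$ to exactly one of the $V_i$.

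The key step is a \emph{splicing lemma}: if $\b,\b'\in\mfB$ satisfy $\phi(\b)=\phi(\b')$, that is, ${\b|}_v={\b'|}_v$ for every $v\in V_0$, then for each $i\in\{1,\dots,\d\}$ the hybrid configuration $h_i$ --- defined to agree with $\b$ on all symbols indexed by $J_i$ and on all states of edges incident with $V_i$, and to agree with $\b'$ on all remaining symbols and states --- again lies in $\mfB$. One checks $h_i|_v\in C_v$ vertex by vertex: at a vertex $v\in V_i$ the local configuration of $h_i$ equals ${\b|}_v$; at a vertex $v\in V_j$ with $j\ge1$, $j\ne i$, it equals ${\b'|}_v$; and at a vertex $v\in V_0$ it equals ${\b|}_v={\b'|}_v$, because every symbol in $\omega^{-1}(v)$ and every state on an edge of $E(v)$ already receives the same value from $\b$ and $\b'$. (This is exactly where the star-partition structure is used: the edges at a $V_0$-vertex reach only $V_0$ and the various $V_i$, so the mixing performed at $V_0$ is harmless.) I expect this lemma to be the main obstacle, in that it is the one point at which the hypotheses really enter; the verification itself is routine but must be carried out with care.

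From the splicing lemma one deduces that whenever $\phi(\b)=\phi(\b')$, the codeword $\psi(\b)-\psi(\b')={(\b-\b')|}_I\in\cC$ vanishes on $J_0$ (since $\b,\b'$ agree on $J_0$-symbols), and each of its $J_i$-sections is a codeword of $\cC$: indeed ${(h_i)|}_I-{\b'|}_I$ is the zero-padded $J_i$-section of ${\b|}_I-{\b'|}_I$ and lies in $\cC$, hence in the cross-section $\cC_{J_i}$. Regarding each $\cC_{J_i}$ as the subcode $\{\c\in\cC:\mathrm{supp}(\c)\subseteq J_i\}$ of $\cC$, and noting that the $J_i$ are pairwise disjoint, we conclude $\psi(\b)-\psi(\b')\in\bigoplus_{i=1}^{\d}\cC_{J_i}$, a subspace of $\cC$ of dimension $\sum_{i=1}^{\d}\dim(\cC_{J_i})$.

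Finally, let $q:\cC\to\cC\big/\bigoplus_{i=1}^{\d}\cC_{J_i}$ be the quotient map. Applying the previous paragraph with $\b'$ ranging over $\ker\phi$ shows $\ker\phi\subseteq\ker(q\circ\psi)$, so the surjection $q\circ\psi$ factors through $\mfB/\ker\phi\cong\mathrm{im}\,\phi$, giving
$$
\dim(\cC)-\sum_{i=1}^{\d}\dim(\cC_{J_i})
=\dim\!\Big(\cC\big/\textstyle\bigoplus_{i=1}^{\d}\cC_{J_i}\Big)
\le\dim(\mathrm{im}\,\phi)\le\sum_{v\in V_0}\dim(C_v),
$$
which is the Vertex--Cut Bound. (Equivalently, one can argue by counting: $\psi$ maps each fibre $\phi^{-1}(\alpha)$ into a single coset of $\bigoplus_i\cC_{J_i}$, the fibres cover $\mfB$ and their $\psi$-images cover $\cC$, so the number $|\F|^{\dim(\cC)-\sum_i\dim(\cC_{J_i})}$ of such cosets is at most the number $|\mathrm{im}\,\phi|\le|\F|^{\sum_{v\in V_0}\dim(C_v)}$ of fibres.)
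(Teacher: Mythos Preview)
Your proof is correct, and it takes a genuinely different route from the paper. The paper proceeds by constructing an auxiliary \emph{star-shaped tree realization} of $\cC$: it collapses each $V_i$ to a single vertex $v_i$ of a star tree $T$, sets the local constraint at $v_i$ to be the projection ${\mfB|}_{V_i}$, and sets the state space on the edge $\{v_0,v_i\}$ to be ${\mfB|}_{X_i}$ where $X_i$ is the set of edges joining $V_0$ and $V_i$. It then invokes two external facts---the formula $\dim(C_{v_0}^*)=\dim(\cC)-\sum_i\dim(\cC_{J_i})$ for the minimal tree realization $\cM(\cC;T,\alpha)$, and the result from \cite{K2} that $\cM(\cC;T,\alpha)$ minimizes local-constraint dimensions vertexwise---to bound $\dim({\mfB|}_{V_0})$ from below, and finishes with $\dim({\mfB|}_{V_0})\le\sum_{v\in V_0}\dim(C_v)$. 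Your argument, by contrast, is entirely self-contained: the splicing lemma and the factoring of $q\circ\psi$ through $\mfB/\ker\phi$ together amount to an \emph{ab initio} proof of exactly the piece of minimal-tree-realization theory that the paper cites. The paper's route is shorter on the page because it outsources the work; your route makes transparent \emph{why} the bound holds, and in particular isolates the star-partition hypothesis precisely at the one place it is needed (verifying $h_i|_v\in C_v$ for $v\in V_0$).
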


\begin{figure}
\epsfig{file=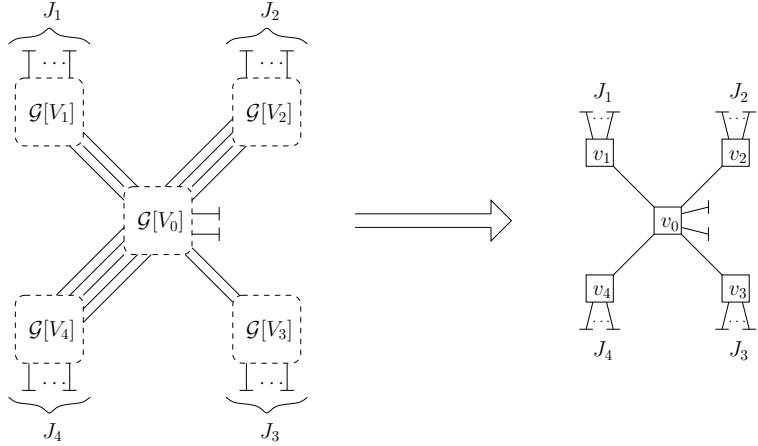, width=10cm}
\caption{A depiction of the construction in the proof of 
the Vertex-Cut Bound. $\cG[V_i]$ denotes the subgraph of $\cG$ 
induced by the vertices in $V_i$.}
\label{vcut_bnd_fig}
\end{figure}

\begin{proof}
Let $\mfB$ denote the full behaviour of $\G$. 
If $\b = ((x_i,i \in I), (\s_e, e \in E))$
is a global configuration in $\mfB$, then given an $X \subseteq E$, 
we will use ${\b|}_X$ to denote the projection $(\s_e, e \in X)$. 
We further set ${\mfB|}_X = \{{\b|}_X: \b \in \mfB\}$.

For $i = 1,2,\ldots,\d$, let $X_i$ be the set of edges of $\cG$ with
exactly one end-point in $V_i$, so that the other end-point is
necessarily in $V_0$. We then define 
$$
{\mfB|}_{V_i} = \{({\b|}_{J_i}, {\b|}_{X_i}):\ \b \in \mfB\},
$$
for $i = 1,2,\ldots,\d$. Furthermore, set $J_0 = \omega^{-1}(V_0)$
and $X_0 = \bigcup_{i=1}^\d X_i$, and define
$$
{\mfB|}_{V_0} = \{({\b|}_{J_0}, {\b|}_{X_0}):\ \b \in \mfB\}.
$$

Now, consider the ``star-shaped'' tree $T = (V_T,E_T)$ 
consisting of a single internal vertex $v_0$ of degree $\d$,
whose neighbours $v_1,v_2,\ldots,v_\d$ are all the leaves of $T$. 
Thus, $V_T = \{v_0,v_1,\ldots,v_\d\}$ and 
$E_T = \{\{v_0,v_i\}:\ i = 1,2,\ldots,\d\}$.
Define the index mapping
$\alpha: I \to V_T$ as follows: $\alpha(j) = v_i$ iff $\omega(j) \in V_i$.
Note that, for $i = 0,1,2,\ldots,\d$, we have 
$\alpha^{-1}(V_i) = \omega^{-1}(V_i) = J_i$. The construction of 
the tree decomposition $(T,\alpha)$ from $(\cG,\omega)$ is depicted
in Figure~\ref{vcut_bnd_fig}.

We next extend the tree decomposition $(T,\alpha)$ to a tree model 
$\widehat{\G} = ((T,\alpha,(\widehat{\cS}_e, e \in E_T),
(\widehat{C}_v, v \in V_T))$ by setting 
$\widehat{C}_{v_i} = {\mfB|}_{V_i}$ for $i = 0,1,2,\ldots,\d$,
and $\widehat{\cS}_{\{v_0,v_i\}} = {\mfB|}_{X_i}$ for $i = 1,2,\ldots,\d$.
From the fact that $\G$ is a realization of $\cC$, it readily follows
that $\widehat{\G}$ is a tree realization of $\cC$.

Recalling that the minimal tree realization $\cM(\cC;T,\alpha)$
minimizes the local constraint code dimension at each vertex of $T$,
we obtain via (\ref{dimCv*}), 
$$
\dim(\widehat{C}_{v_0}) \geq \dim(\cC) - \sum_{i=1}^\d \dim(\cC_{J_i}).
$$
We complete the proof by observing that 
$$
\dim(\widehat{C}_{v_0}) = \dim({\mfB|}_{V_0}) 
\leq \sum_{v \in V_0} \dim({\mfB|}_v) = \sum_{v \in V_0} \dim(C_v).
$$
\end{proof}

The following useful corollary is an immediate consequence of 
the Vertex-Cut Bound.

\begin{corollary}
Let $(\cG,\omega)$ be a graph decomposition of a code $\cC$,
where $\cG$ is a connected graph. 
For a vertex cut $W$ of $\cG$, if $\cG_1,\cG_2,\ldots,\cG_\d$ 
are the components of $\cG - W$, then define 
$\lambda(W) = \dim(\cC) - \sum_{i=1}^\d \dim(\cC_{J_i})$,
where $J_i = \omega^{-1}(V(\cG_i))$ for $i = 1,2,\ldots,\d$.
Then, for any realization $\G = (\cG,\omega,(C_v,v\in V),(\cS_e,e\in E))$ 
of $\cC$ that extends $(\cG,\omega)$, we have
$$
\sum_{v \in W}\dim(C_v) \geq \lambda(W).
$$
\label{vcut_cor}
\end{corollary}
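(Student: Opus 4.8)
The plan is to read this off directly from the Vertex-Cut Bound (Theorem~\ref{vertex-cut_bound}) by exhibiting the appropriate star partition. Given the vertex cut $W$ and the components $\cG_1,\cG_2,\ldots,\cG_\d$ of $\cG - W$, I would set $V_0 = W$ and $V_i = V(\cG_i)$ for $i = 1,2,\ldots,\d$. Since the vertex sets of the components of $\cG - W$ partition $V \setminus W$, the tuple $(V_0,V_1,\ldots,V_\d)$ is a partition of $V$.

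The only thing requiring a word of justification is the defining property of a star partition, namely $N(V_i) \subseteq V_i \cup V_0$ for each $i \in \{1,2,\ldots,\d\}$. This is the routine (and essentially the sole) step: if a vertex $u \in V(\cG_i)$ had a neighbour $w$ with $w \notin W$, then the edge $\{u,w\}$ would still be present in $\cG - W$, forcing $w$ into the same component $\cG_i$ as $u$; hence every neighbour of a vertex of $V_i$ lies in $V_i$ or in $W = V_0$. This is precisely the observation already recorded in the paragraph following Definition~\ref{vcut_property_def}, so I would simply invoke it rather than re-derive it.

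With $(V_0,V_1,\ldots,V_\d)$ now known to be a star partition, and with $J_i = \omega^{-1}(V_i) = \omega^{-1}(V(\cG_i))$ matching the definition of $\lambda(W)$ in the corollary, Theorem~\ref{vertex-cut_bound} applied to any realization $\G = (\cG,\omega,(C_v,v\in V),(\cS_e,e\in E))$ extending $(\cG,\omega)$ yields $\sum_{v \in W}\dim(C_v) = \sum_{v \in V_0}\dim(C_v) \geq \dim(\cC) - \sum_{i=1}^\d \dim(\cC_{J_i}) = \lambda(W)$, which is the desired inequality. There is no genuine obstacle here: the content of the corollary is entirely contained in the theorem, and the star-partition verification above is the single point that needs to be spelled out.
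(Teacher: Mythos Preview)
Your proposal is correct and is exactly the argument the paper has in mind: the paper states the corollary as ``an immediate consequence of the Vertex-Cut Bound,'' and the star partition you build, $V_0 = W$, $V_i = V(\cG_i)$, is precisely the one already identified in the paragraph following Definition~\ref{vcut_property_def}. There is nothing to add or correct.
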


In its most straightforward application, the Vertex-Cut Bound, 
via the above corollary, can be used in conjunction with constrained
optimization techniques to find lower bounds on $\k(\cC;\cG,\omega)$,
$\k^+(\cC;\cG,\omega)$ and $\k^\tot(\cC;\cG,\omega)$.
Indeed, if $(\cG,\omega)$ is a graph decomposition of $\cC$, and  
$W_1,W_2,\ldots,W_t$ are vertex cuts of $\cG$, then, by
Corollary~\ref{vcut_cor}, the dimensions of the local constraint codes 
$C_v$, $v \in V$, in any $\G \in \mfR(\cC;\cG,\omega)$ must satisfy 
$\sum_{v \in W_i}\dim(C_v) \geq \lambda(W_i)$, $i = 1,2,\ldots,t$.
Thus, for example, $\k^+(\cC;\cG,\omega)$ is lower bounded by the
solution to the following linear programming problem in the variables
$\xi_v$, $v \in V$: given a collection of vertex cuts $W_1,W_2,\ldots,W_t$
of $\cG$,
$$
\text{minimize}\ \ \sum_{v \in V} \xi_v,
\ \ \text{subject to}\ \ 
\sum_{w \in W_i} \xi_w \geq \l(W_i),\ i = 1,2,\ldots, t.
$$

However, we do not pursue this angle any further in this paper. 
Instead, we will henceforth restrict our attention to the 
$\k$-complexity measure, for which we will derive a suite
of lower bounds, again based on the Vertex-Cut Bound, which 
unearth some interesting connections with graph theory, 
and moreover, are amenable to further mathematical analysis. 
The bounds we derive rely on the notion of vertex-cut trees 
introduced in the next section.

\section{Vertex-Cut Trees\label{vctree_section}}

We begin with a simple lemma, which plays a role in our
definition of a vertex-cut tree below.

\begin{lemma}
Suppose that $V_0, V_1, \ldots, V_\d$ are subsets of $V$ such that
$\bigcup_{i=0}^\d V_i = V$. If, for each pair of distinct indices 
$i,j$, we have $V_i \cap V_j \subseteq V_0$, then 
$(V_0,V_1 \setminus V_0, \ldots, V_\d \setminus V_0)$ is a partition of $V$.
\label{cover_lemma}
\end{lemma}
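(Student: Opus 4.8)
The plan is to verify directly the two defining properties of a partition for the collection $(V_0, V_1 \setminus V_0, \ldots, V_\d \setminus V_0)$: that the sets are pairwise disjoint, and that their union is all of $V$. Both should follow from the hypotheses by elementary set manipulations, so this is a short argument.

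For pairwise disjointness, I would consider cases according to whether one of the two sets in question is $V_0$. If we compare $V_0$ with $V_i \setminus V_0$ for $i \geq 1$, disjointness is immediate since $V_i \setminus V_0$ contains no element of $V_0$. If we compare $V_i \setminus V_0$ with $V_j \setminus V_0$ for distinct $i, j \geq 1$, then any common element $x$ would lie in $V_i \cap V_j$, hence in $V_0$ by hypothesis, contradicting $x \notin V_0$. So the sets are pairwise disjoint.

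For the union, I would argue $\bigcup_{i=0}^\d V_i = V_0 \cup \bigcup_{i=1}^\d (V_i \setminus V_0)$. The inclusion ``$\supseteq$'' is clear. For ``$\subseteq$'', take $x \in V = \bigcup_{i=0}^\d V_i$; if $x \in V_0$ we are done, and otherwise $x \in V_i$ for some $i \geq 1$ with $x \notin V_0$, so $x \in V_i \setminus V_0$. Combining with the hypothesis $\bigcup_{i=0}^\d V_i = V$ gives that the union of the claimed partition is exactly $V$.

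There is no real obstacle here; the only point requiring a moment's care is the case analysis in the disjointness argument, specifically remembering to handle the $V_0$-versus-$(V_i \setminus V_0)$ comparisons separately from the $(V_i \setminus V_0)$-versus-$(V_j \setminus V_0)$ comparisons. I would present the whole thing in two or three sentences without display math.
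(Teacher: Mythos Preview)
Your proposal is correct and matches the paper's proof essentially line for line: the paper also verifies the union identity $V_0 \cup \bigcup_{i=1}^\d (V_i \setminus V_0) = \bigcup_{i=0}^\d V_i = V$ and then checks disjointness of $V_i \setminus V_0$ and $V_j \setminus V_0$ for $i,j \geq 1$ via $(V_i \setminus V_0) \cap (V_j \setminus V_0) = (V_i \cap V_j) \setminus V_0 = \emptyset$. The only difference is that the paper leaves the trivial $V_0$-versus-$(V_i \setminus V_0)$ case implicit, whereas you spell it out.
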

\begin{proof}
It is evident that $V_0 \cup \bigcup_{i=1}^\d (V_i \setminus V_0) = 
\bigcup_{i=0}^\d V_i = V$. If $i,j > 0$, $i \neq j$, then 
$(V_i \setminus V_0) \cap (V_j \setminus V_0) = (V_i \cap V_j) \setminus V_0
= \emptyset$, since $V_i \cap V_j \subseteq V_0$.
\end{proof}

For the main definition of this section, we introduce some convenient 
notation, which will henceforth be used consistently.
If $A$ and $B$ are sets, and $f: A \to 2^B$ is a mapping from $A$ 
to the power set of $B$, then for any $X \subseteq A$, 
we define $f(X) = \bigcup_{x \in X} f(x)$. Also, recall
that if $z$ is a vertex of degree $\d$ in a tree $T$, then
the components of $T-z$ are denoted by $T_i^{(z)}$, $i = 1,2,\ldots,\d$.

\begin{definition}
Let $\cG$ be a connected graph. A \emph{vertex-cut tree} of $\cG$
is a data structure $(T,\beta)$, where $T$ is a tree, 
and $\beta: V(T) \to 2^{V(\cG)}$ is a mapping with the following properties:
\begin{itemize}
\item[(VC1)] $\beta(V(T)) = V(\cG)$;
\item[(VC2)] for each pair $x,y \in V(T)$, if $z \in V(T)$ is any vertex 
that lies on the unique path between $x$ and $y$ in $T$, 
then $\beta(x) \cap \beta(y) \subseteq \beta(z)$;
\item[(VC3)] for each $z \in V(T)$, $(\beta(z),V_1,V_2,\ldots,V_\d)$
is a star partition of $V(\cG)$, where $\d$ is the degree of $z$, and 
$V_i = \beta(V(T_i^{(z)})) \setminus \beta(z)$ for $i = 1,2,\ldots,\d$.
\end{itemize}
A vertex-cut tree $(T,\beta)$ in which $T$ is a path is
called a \emph{vertex-cut path}.
\label{vcut_tree_def}
\end{definition}

Note that, by Lemma~\ref{cover_lemma}, conditions (VC1) and (VC2) in the
above definition imply that, for each $z \in V(T)$, with $\d$ and $V_i$,
$i = 1,2,\ldots,\d$, as in condition (VC3), 
$(\beta(z),V_1,V_2,\ldots,V_\d)$ is a partition of $V(\cG)$. 
Thus, for $(T,\beta)$ satisfying conditions (VC1) and (VC2), 
condition (VC3) is met iff, for each $z \in V(T)$, we have
$N(V_i) \subseteq V_i \cup \beta(z)$ for all $i \geq 1$.

Trivial vertex-cut trees (and paths) always exist for a graph $\cG$ ---
given any tree $T$, pick a vertex $z_0 \in V(T)$, and define 
a mapping $\beta$ by setting $\beta(z_0) = V(\cG)$, and 
$\beta(z) = \emptyset$ for $z \neq z_0$. This allows us to make the
following definition.

\begin{definition}
Let $\cG$ be a connected graph. 
The \emph{vc-width} of a vertex-cut tree $(T,\beta)$ of $\cG$ is defined as 
$\max_{z \in V(T)} |\beta(z)|$, and is denoted by $\text{vc-width}(T,\beta)$.
The \emph{vc-treewidth} (resp.\ \emph{vc-pathwidth}) of $\cG$ is the 
least vc-width among all vertex-cut trees (resp.\ vertex-cut paths) of $\cG$,
and is denoted by $\k_{\mbox{\scriptsize\emph{vc-tree}}}(\cG)$ 
(resp.\ $\k_{\mbox{\scriptsize\emph{vc-path}}}(\cG)$).
\label{vcwidth_def}
\end{definition}

Thus, for any graph $\cG$, we have 
$0 < \k_\vctree(\cG) \leq \k_\vcpath(\cG) \leq |V(\cG)|$.
The vc-treewidth of any tree is equal to one. Indeed, if $T$ is a tree,
then $(T,\beta)$, defined by $\beta(z) = \{z\}$ for all $z \in V(T)$,
is a vertex-cut tree of $T$, with vc-width equal to one.

\begin{example}
Let $\cG$ be an $n$-cycle with vertices $v_0,v_1,\ldots,v_{n-1}$,
labeled in cyclic order. Let $P$ be a path with $n-1$ vertices, 
which in the linear order defined by the path, are 
labeled $z_1,z_2,\ldots,z_{n-1}$. To be precise, $z_0$ is one of the two
leaves, and for $i=1,2,\ldots,n-1$, $z_i$ is adjacent to $z_{i-1}$ in $P$.
If we define the mapping $\beta:V(P) \to 2^{V(\cG)}$ as
$\beta(z_i) = \{v_0,v_i\}$ for $i = 1,2,\ldots,n-1$, then 
$(P,\beta)$ is a vertex-cut path of $\cG$, of vc-width two. 
It is not difficult to verify that $\cG$ has no vertex-cut tree 
of vc-width one, and hence, 
$\k_{\mbox{\scriptsize\emph{vc-tree}}}(\cG) = 
\k_{\mbox{\scriptsize\emph{vc-path}}}(\cG) = 2$.
\label{cycle_example}
\end{example}

Our definition of vertex-cut trees may appear at first to be an 
artificial construct brought in solely for the purpose of 
finding applications for the Vertex-Cut Bound. However, this is
far from being the case. Vertex-cut trees are very closely related
to junction trees, commonly associated with belief propagation algorithms
in Bayesian networks \cite{jensen} and coding theory \cite{AM00}.
Another close relative of vertex-cut trees is the data structure 
known as tree decomposition (of a graph)
\cite{RS-I},\cite{arnborg},\cite{bod93},
which has received considerable attention in the graph theory and
computer science literatures.

\begin{definition}
A \emph{tree decomposition} of a connected graph $\cG$
is a data structure $(T,\beta)$, where $T$ is a tree, 
and $\beta: V(T) \to 2^{V(\cG)}$ is a mapping with the following properties:
\begin{itemize}
\item[(T1)] $\beta(V(T)) = V(\cG)$;
\item[(T2)] for each pair $x,y \in V(T)$, if $z \in V(T)$ is any vertex 
that lies on the unique path between $x$ and $y$ in $T$, 
then $\beta(x) \cap \beta(y) \subseteq \beta(z)$;
\item[(T3)] for each pair of adjacent vertices $u,v \in V(\cG)$, 
there exists a $z \in V(T)$ such that $\{u,v\} \subseteq \beta(z)$.
\end{itemize}
A tree decomposition $(T,\beta)$ in which $T$ is a path is
called a \emph{path decomposition}.
\label{tree_decomp_def}
\end{definition}

The \emph{width} of a tree decomposition as above is defined to be
$\max_{z \in V(T)} |\beta(z)| - 1$. The \emph{treewidth} 
(resp.\ \emph{pathwidth}) of a graph $\cG$, denoted by $\k_\tree(\cG)$
(resp.\ $\k_\path(\cG)$), is the minimum among the widths of all its 
tree (resp.\ path) decompositions. Note that if $\cG$ has at least one 
non-loop edge, then, because of (T3), any tree decomposition of $\cG$
must have width at least one. Thus, for any such graph $\cG$, we 
have $0 < \k_\tree(\cG) \leq \k_\path(\cG) \leq |V(\cG)|-1$.
It is not hard to check that any tree with at least two vertices
has treewidth equal to one\footnote{Without the `$-1$' in the 
definition of width of a tree decomposition, a tree with at 
least two vertices would have treewidth equal to two.},
and that if $\cG$ is an $n$-cycle,
then $\k_\tree(\cG) = \k_\path(\cG) = 2$.

\begin{lemma}
Any tree decomposition of a connected graph $\cG$ is also a vertex-cut 
tree of $\cG$. Hence, $\k_{\mbox\scriptsize\emph{vc-tree}}(\cG) 
\leq \k_{\mbox\scriptsize\emph{tree}}(\cG) + 1$, 
and $\k_{\mbox\scriptsize\emph{vc-path}}(\cG) 
\leq \k_{\mbox\scriptsize\emph{path}}(\cG) + 1$.
\label{tree_vctree_lemma}
\end{lemma}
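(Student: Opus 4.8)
The plan is to show that conditions (T1)--(T3) imply (VC1)--(VC3), since the inequalities on vc-treewidth and vc-pathwidth then follow immediately: a tree (resp.\ path) decomposition of width $w$ has $\max_z |\beta(z)| = w+1$, and if it is also a vertex-cut tree (resp.\ vertex-cut path), then $\k_\vctree(\cG) \leq w+1$ (resp.\ $\k_\vcpath(\cG) \leq w+1$); minimizing over all tree (resp.\ path) decompositions gives the claim. Observe first that (VC1) is literally (T1), and (VC2) is literally (T2), so only (VC3) requires work.

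To verify (VC3), fix $z \in V(T)$ of degree $\d$, let $T_1^{(z)}, \ldots, T_\d^{(z)}$ be the components of $T - z$, and set $V_i = \beta(V(T_i^{(z)})) \setminus \beta(z)$ for $i = 1, \ldots, \d$. As the remark following Definition~\ref{vcut_tree_def} notes, since (VC1) and (VC2) already hold, it suffices to check that $(\beta(z), V_1, \ldots, V_\d)$ is a partition of $V(\cG)$ (which is automatic from Lemma~\ref{cover_lemma}, once we know $\beta(V(T_i^{(z)})) \cap \beta(V(T_j^{(z)})) \subseteq \beta(z)$ for $i \neq j$ --- and this holds because any $x \in V(T_i^{(z)})$ and $y \in V(T_j^{(z)})$ have $z$ on their connecting path in $T$, so (VC2) applies), and that $N(V_i) \subseteq V_i \cup \beta(z)$ for each $i$. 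So the real content is this last containment. Suppose $u \in V_i$ and $v \in N(u)$, i.e.\ $\{u,v\}$ is an edge of $\cG$. By (T3) there is some $w \in V(T)$ with $\{u,v\} \subseteq \beta(w)$. If $w \in V(T_i^{(z)})$, then $v \in \beta(V(T_i^{(z)}))$, so $v \in V_i$ or $v \in \beta(z)$, and we are done. If $w = z$, then $v \in \beta(z)$, done. The only remaining case is $w \in V(T_j^{(z)})$ for some $j \neq i$; here I argue that $u \in \beta(z)$, contradicting $u \in V_i$: indeed $u \in V_i$ means $u \in \beta(x)$ for some $x \in V(T_i^{(z)})$, and $u \in \beta(w)$ with $w \in V(T_j^{(z)})$, so $z$ lies on the $x$--$w$ path in $T$, whence $u \in \beta(x) \cap \beta(w) \subseteq \beta(z)$ by (VC2)/(T2) --- contradiction. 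Hence $w \notin V(T_j^{(z)})$ for $j \neq i$, and $v \in V_i \cup \beta(z)$ in all remaining cases, establishing $N(V_i) \subseteq V_i \cup \beta(z)$.

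The main (and essentially only) obstacle is the case analysis on the location of the bag $w$ guaranteed by (T3): one has to rule out the "wrong-component" case using (T2), which is where the separation structure of the tree decomposition is actually used. Everything else is bookkeeping: (VC1)=(T1), (VC2)=(T2), and the partition property is Lemma~\ref{cover_lemma}. Finally, since a path decomposition is in particular a tree decomposition in which $T$ is a path, the same argument shows it is a vertex-cut path, giving the second inequality.
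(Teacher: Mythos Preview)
Your proof is correct and follows essentially the same approach as the paper: reduce to (VC3), use Lemma~\ref{cover_lemma} for the partition property, then invoke (T3) to obtain a bag containing both endpoints of an edge and use (T2) to rule out the wrong-component case. The only cosmetic difference is that the paper argues directly by contradiction (assuming a neighbour lies outside $V_i \cup \beta(z)$), whereas you do an explicit three-way case split on the location of the bag $w$; the ``wrong-component'' case is handled identically in both.
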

\begin{proof}
Let $(T,\beta)$ be a tree decomposition of $\cG$. We only have to show
that $(T,\beta)$ satisfies condition (VC3) of Definition~\ref{vcut_tree_def}.
Consider any $z \in V(T)$, with degree $\d$, and let $V_i$, 
$i=1,\ldots,\d$, be defined as in condition (VC3). By virtue of 
(T1), (T2) and Lemma~\ref{cover_lemma}, $(\beta(z),V_1,\ldots,V_\d)$
is a partition of $V(\cG)$. Thus, we must show that
$N(V_i) \subseteq V_i \cup \beta(z)$ for $i = 1,\ldots,\d$.

Suppose, to the contrary, that there exists a $v \in V_i$ 
such that for some $u \in N(v)$, we have $u \notin V_i \cup \beta(z)$. 
Thus, by definition of $V_i$, we have 
$u \notin \beta(V(T_i^{(z)})) \cup \beta(z)$,
while $v \in \beta(x) \setminus \beta(z)$ 
for some $x \in V(T_i^{(z)})$. Now, by condition (T3), we have
$\{u,v\} \subseteq \beta(y)$ for some $y \in V(T)$.
In particular, $v \in \beta(x) \cap \beta(y)$.
On the other hand, by our assumption on $u$, we must have $y \neq z$ and 
$y \notin V(T_i^{(z)})$. Thus, $y \in V(T_j^{(z)})$ for some $j \neq i$. 
This means that the vertex $z$ lies on the unique
path in $T$ joining $x$ and $y$, and hence, by (T2), $v \in \beta(z)$,
which contradicts the existence of $v$ as postulated.
\end{proof}

\begin{figure}
\epsfig{file=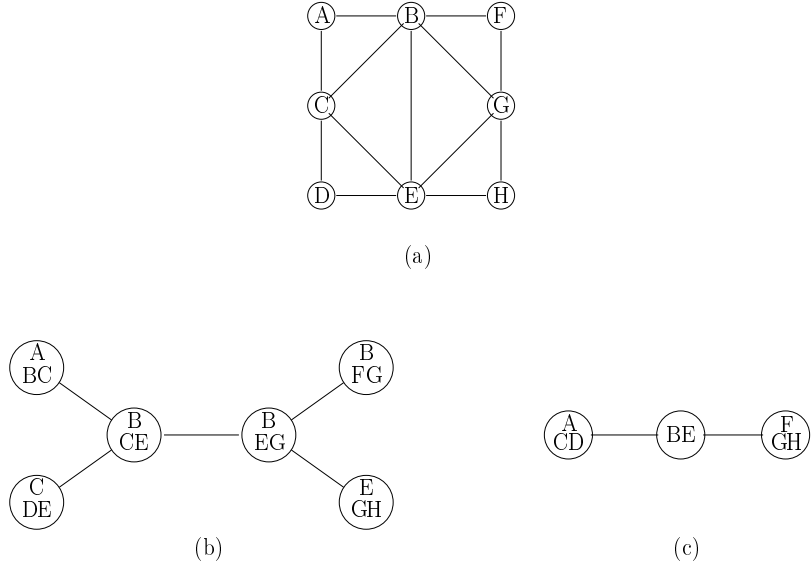, height=7.5cm}
\caption{(a)\ A graph $\cG$ on eight vertices. (b)\ A tree decomposition
of $\cG$ with width 2, and vc-width 3. 
(c) A vertex-cut path of $\cG$ of vc-width 3
that is not a path decomposition of $\cG$.}
\label{vcut_trees}
\end{figure}

A graph $\cG$ can have a vertex-cut tree that is not a tree decomposition.
Figure~\ref{vcut_trees} shows an example of a vertex-cut path 
of a graph $\cG$ that is not a path decomposition of $\cG$.
Also, the inequality $\k_\vctree(\cG) \leq \k_\tree(\cG) + 1$ 
in Lemma~\ref{tree_vctree_lemma} can hold with equality ---
for the graph $\cG$ in Figure~\ref{vcut_trees}(a), it is possible 
to show that $\k_\tree(\cG) = 2$, while 
$\k_\vctree(\cG) = \k_\vcpath(\cG) = 3$.

\section{Lower Bounds on $\k$-Complexity\label{lobnds_section}}

Vertex-cut trees allow us to derive lower bounds on 
the $\k$-complexity of a graphical realization of a code, 
as we now show. Let $(\cG,\omega)$ be a graph decomposition of a 
code $\cC$, and let $(T,\beta)$ be a vertex-cut tree of $\cG$. 
For each $z \in V(T)$, set $J_i = \omega^{-1}(V_i)$
for $i = 1,2,\ldots,\d$, where the $V_i$'s are as defined in 
condition~(VC3) of Definition~\ref{vcut_tree_def}. Further define
\beq
m(z) = \dim(\cC) - \sum_{i=1}^\d \dim(\cC_{J_i}).
\label{m_def}
\eeq
Now, consider any graphical realization, 
$\G = (\cG,\omega,(\cS_e,e \in E),(C_v, v\in V))$,
of $\cC$ that extends $(\cG,\omega)$. By the Vertex-Cut Bound, we have,
for each $z \in V(T)$, $\sum_{v \in \beta(z)} \dim(C_v) \geq m(z)$.
Since $\sum_{v \in \beta(z)} \dim(C_v) \leq 
\left(\max_{v \in V(\cG)} \dim(C_v)\right) \, |\beta(z)| 
= \k(\G) \, |\beta(z)|$, we obtain
$$
\k(\G) \, |\beta(z)| \geq m(z).
$$
Maximizing over all $z \in V(T)$, we get
\beq
\k(\G) \cdot \text{vc-width}(T,\beta) 
\geq \max_{z \in V(T)} m(z)
\ \define \ \mu(\cC;\omega,\beta)
\label{mu_def}
\eeq
We thus have the following proposition.
\begin{proposition}
Let $(\cG,\omega)$ be a graph decomposition of a code $\cC$,
and let $(T,\beta)$ be a vertex-cut tree of $\cG$. 
Then, for any graphical realization 
$\G \in \mfR(\cC;\cG,\omega)$, we have
$$
\k(\G) \geq \frac{\mu(\cC;\omega,\beta)}
{\emph{vc-width}(T,\beta)}.
$$
\label{k_Gamma_lobnd}
\end{proposition}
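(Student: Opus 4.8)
The plan is to assemble the chain of inequalities already sketched in the paragraph preceding the statement; the only substantive ingredient is the Vertex-Cut Bound (Theorem~\ref{vertex-cut_bound}), and the role of Definition~\ref{vcut_tree_def} is simply to guarantee that this bound is legitimately applicable at every node of $T$. So, first I would fix an arbitrary realization $\G = (\cG,\omega,(\cS_e,e\in E),(C_v,v\in V)) \in \mfR(\cC;\cG,\omega)$ and an arbitrary node $z \in V(T)$, let $\d$ be the degree of $z$ in $T$, and set $V_i = \beta(V(T_i^{(z)})) \setminus \beta(z)$ and $J_i = \omega^{-1}(V_i)$ for $i = 1,\ldots,\d$. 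Condition~(VC3) of Definition~\ref{vcut_tree_def} asserts precisely that $(\beta(z),V_1,\ldots,V_\d)$ is a star partition of $V(\cG)$, so the Vertex-Cut Bound applies with $V_0 = \beta(z)$ and yields
$$
\sum_{v \in \beta(z)} \dim(C_v) \;\geq\; \dim(\cC) - \sum_{i=1}^{\d}\dim(\cC_{J_i}) \;=\; m(z),
$$
with $m(z)$ as defined in~(\ref{m_def}). This is the one place where anything nontrivial is used; everything that follows is bookkeeping.

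Next I would bound each summand on the left by $\k(\G) = \max_{v\in V(\cG)}\dim(C_v)$, obtaining $\sum_{v \in \beta(z)}\dim(C_v) \leq |\beta(z)|\,\k(\G) \leq \text{vc-width}(T,\beta)\,\k(\G)$, where the last inequality uses $|\beta(z)| \leq \text{vc-width}(T,\beta)$ together with $\k(\G) \geq 0$. Combining this with the displayed inequality gives $\text{vc-width}(T,\beta)\,\k(\G) \geq m(z)$ for every $z \in V(T)$. Taking the maximum over $z$ of the right-hand sides produces $\text{vc-width}(T,\beta)\,\k(\G) \geq \max_{z\in V(T)} m(z) = \mu(\cC;\omega,\beta)$, and dividing through by $\text{vc-width}(T,\beta)$, which is strictly positive for any vertex-cut tree, gives the claimed bound.

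I do not expect any genuine obstacle: the content of the proposition has been front-loaded into the Vertex-Cut Bound and into the careful phrasing of condition~(VC3), so the argument is essentially a transcription of the discussion already in the text. The only point that might warrant an extra sentence is the degenerate case $\beta(z) = \emptyset$ for some node $z$; then the ``cut'' $\beta(z)$ is empty, so (using (VC1), (VC2) and Lemma~\ref{cover_lemma}, the definition of a star partition, and the connectedness of $\cG$) all but one of the $V_i$ must be empty and the remaining one equals $V(\cG)$, whence $m(z) = \dim(\cC) - \dim(\cC_I) = 0$ and the inequality $\text{vc-width}(T,\beta)\,\k(\G) \geq m(z)$ holds trivially. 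Thus such nodes do not affect the maximum defining $\mu(\cC;\omega,\beta)$, and the division at the last step is legitimate since $\text{vc-width}(T,\beta)\geq 1$ always.
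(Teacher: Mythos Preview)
Your proposal is correct and follows essentially the same argument as the paper, which derives the bound in the paragraph immediately preceding the proposition: apply the Vertex-Cut Bound at each node $z$ of $T$ using the star partition guaranteed by (VC3), bound the left side by $|\beta(z)|\,\k(\G)$, and maximize over $z$. Your extra remark on the degenerate case $\beta(z)=\emptyset$ is a harmless addition the paper does not make explicit.
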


The lower bound in the above proposition can be brought into a form more
convenient for further analysis. To do this, we will 
construct a tree decomposition $(T,\gm)$ of $\cC$ such that 
$\mu(\cC;\omega,\beta) = \k(\cC;T,\gm)$. 
So, once again, let $(\cG,\omega)$ be a given graph decomposition 
of $\cC$, and $(T,\beta)$ a given vertex-cut tree of $\cG$. We will first
give a recipe for constructing tree decompositions of $\cC$ from 
$(\cG,\omega)$ and $(T,\beta)$. Then, we will show how the main ingredient
of the recipe may be chosen so that the resulting tree decomposition
$(T,\gm)$ satisfies $\mu(\cC;\omega,\beta) = \k(\cC;T,\gm)$.

\begin{figure}
\epsfig{file=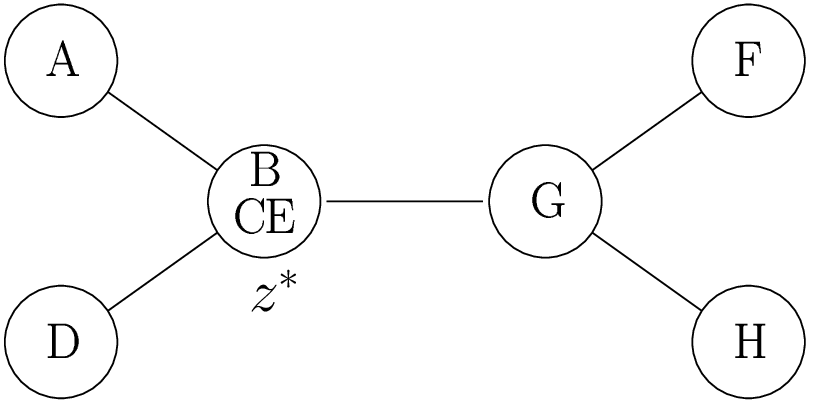, width=5.2cm}
\caption{The mapping $\alpha:V(T) \to 2^{V(\cG)}$, with
the vertex $z^*$ chosen as shown, constructed
from the vertex-cut tree in Figure~\ref{vcut_trees}(b).}
\label{alpha_fig}
\end{figure}

For any pair of vertices $x,y \in V(T)$, let $(x,y]$ denote the set of
vertices on the unique path between $x$ and $y$ in $T$, including
$y$, but not including $x$. Also, we will use $\beta(x,y]$,
instead of the more cumbersome $\beta((x,y])$, to denote the
set $\bigcup_{z \in (x,y]} \beta(z)$.
Pick an arbitrary vertex $z^*$ in $V(T)$. Define the mapping 
$\a:V(T) \to 2^{V(\cG)}$ as follows: $\a(z^*) = \beta(z^*)$, 
and for $z \neq z^*$,
$$
\a(z) = \beta(z) \setminus \beta(z,z^*].
$$
An example of such a mapping $\alpha$ constructed from
the vertex-cut tree $(T,\beta)$ in Figure~\ref{vcut_trees}(b) 
is depicted in Figure~\ref{alpha_fig}.

We record in the next two lemmas some properties of the mapping $\a$ that 
we use in the sequel. Recall our convention that for $X \subseteq V(T)$,
$\a(X) = \bigcup_{x \in X} \a(x)$.

\begin{lemma}
For $z \in V(T)$, if $T^{(z)}$ is any component of $T - z$, then
$$
\a(V(T^{(z)})) = 
\begin{cases}
\beta(V(T^{(z)})) & \text{ if $z^* \in V(T^{(z)})$} \\
\beta(V(T^{(z)})) \setminus \beta(z) & \text{ if $z^* \notin V(T^{(z)})$}.
\end{cases}
$$
\label{alpha_lemma}
\end{lemma}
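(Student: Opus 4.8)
The plan is to analyze the set $\a(V(T^{(z)}))$ by partitioning the vertices of $T^{(z)}$ according to their path to $z^*$. Fix $z$ and a component $T^{(z)}$ of $T-z$. For any vertex $w\in V(T)$, recall $\a(w)=\beta(w)\setminus\beta(w,z^*]$ when $w\neq z^*$, and $\a(z^*)=\beta(z^*)$. The key observation is that for $w\in V(T^{(z)})$, the path from $w$ to $z^*$ either stays inside $V(T^{(z)})$ (this happens exactly when $z^*\in V(T^{(z)})$ and $w$ is ``between'' the bulk of $T^{(z)}$ and $z^*$... more precisely, when $z\notin(w,z^*]$) or it exits through $z$ (when $z^*\notin V(T^{(z)})$, forcing $z\in(w,z^*]$ for every $w\in V(T^{(z)})$).

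Consider first the case $z^*\notin V(T^{(z)})$. Then for every $w\in V(T^{(z)})$, the unique $w$-to-$z^*$ path leaves $T^{(z)}$ via $z$, so $z\in(w,z^*]$, hence $\beta(z)\subseteq\beta(w,z^*]$ and therefore $\a(w)=\beta(w)\setminus\beta(w,z^*]\subseteq\beta(w)\setminus\beta(z)$. Taking unions, $\a(V(T^{(z)}))\subseteq\beta(V(T^{(z)}))\setminus\beta(z)$. For the reverse inclusion, take any $u\in\beta(x)\setminus\beta(z)$ with $x\in V(T^{(z)})$; I claim $u\in\a(x')$ for a suitable $x'\in V(T^{(z)})$. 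Walk along the path from $x$ toward $z^*$ as long as $u\in\beta(\cdot)$; let $x'$ be the last such vertex before $u$ drops out. Since $u\notin\beta(z)$ and $z$ is on that path, $x'$ is reached before $z$, so $x'\in V(T^{(z)})$; and by construction $u\in\beta(x')$ but $u\notin\beta(z')$ for every $z'\in(x',z^*]$ — here I must invoke property (VC2): if $u$ were to reappear in some $\beta(z'')$ further along, then since $x'$ and $z''$ both contain $u$ and the intermediate vertex (where $u$ ``dropped out'') lies between them, (VC2) forces $u$ into that intermediate bag, a contradiction. Hence $u\in\beta(x')\setminus\beta(x',z^*]=\a(x')$, giving $\beta(V(T^{(z)}))\setminus\beta(z)\subseteq\a(V(T^{(z)}))$.

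Now the case $z^*\in V(T^{(z)})$. The inclusion $\a(V(T^{(z)}))\subseteq\beta(V(T^{(z)}))$ is immediate since $\a(w)\subseteq\beta(w)$ always. For the reverse, take $u\in\beta(x)$ with $x\in V(T^{(z)})$. If $u\in\beta(z^*)=\a(z^*)$ we are done, and $z^*\in V(T^{(z)})$. Otherwise, apply the same ``last-vertex-before-$u$-drops-out'' argument along the path from $x$ to $z^*$: this path stays entirely within $V(T^{(z)})$ (it never needs to pass through $z$, since both endpoints lie in $V(T^{(z)})$ and $T$ is a tree), so the resulting vertex $x'$ satisfies $x'\in V(T^{(z)})$ and, by the (VC2) argument as above, $u\in\a(x')$. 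Thus $\beta(V(T^{(z)}))\subseteq\a(V(T^{(z)}))$.

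The main obstacle is the reverse-inclusion direction in each case, and specifically making the ``track $u$ along the path until it disappears'' argument rigorous; the crucial point that makes it work is property (VC2), which guarantees that once $u$ leaves the bags along a path it cannot return, so the vertex $x'$ where it was last seen is well-defined and $u\in\a(x')$. Everything else — the path-exits-through-$z$ dichotomy and the containments $\a(w)\subseteq\beta(w)\setminus\beta(z)$ when $z\in(w,z^*]$ — is routine manipulation of the definition of $\a$ together with the tree structure of $T$.
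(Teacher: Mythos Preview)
Your proof is correct and takes essentially the same approach as the paper: both rely on following an element along the unique path toward $z^*$ and invoking (VC2) to control which bags it can lie in. The paper presents this more tersely as a set-theoretic computation (asserting, for instance, that $\bigcup_{x\in X}\a(x)=\beta(X)\setminus\beta(z^*)$ when $z^*\in V(T^{(z)})$, and in the other case using (VC2) to reduce $\setminus\,\beta[z,z^*]$ to $\setminus\,\beta(z)$), whereas your element-chasing makes the ``last vertex before $u$ drops out'' step explicit; the underlying idea is the same.
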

\begin{proof}
Suppose first that $z^* \in V(T^{(z)})$, and set 
$X = V(T^{(z)}) \setminus \{z^*\}$. It then follows from 
the definition of $\a$ that 
$$
\bigcup_{x \in X} \a(x) 
= \left(\bigcup_{x \in X} \beta(x)\right) \setminus \beta(z^*).
$$
Therefore, $\a(V(T^{(z)})) = \a(X) \cup \a(z^*) = 
(\beta(X) \setminus \beta(z^*)) \cup \beta(z^*) = \beta(V(T^{(z)}))$.

\begin{figure}
\epsfig{file=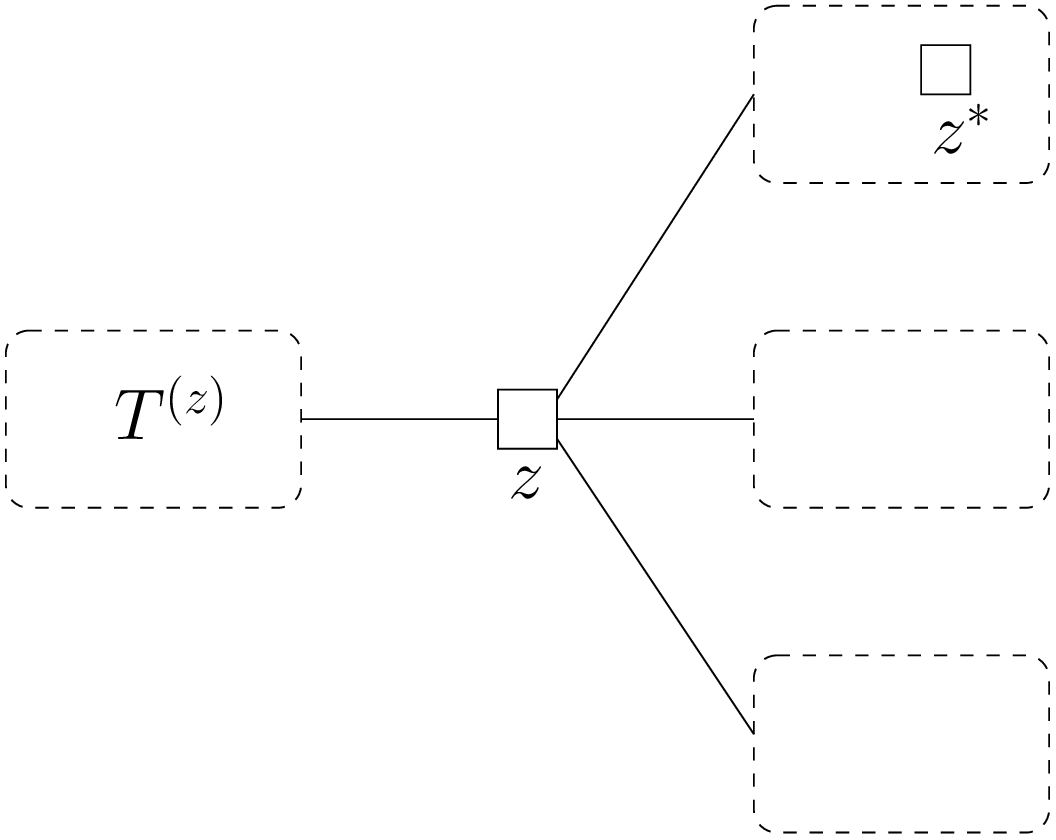, width=6.5cm}
\caption{A depiction of the situation when $z^* \notin V(T^{(z)})$.
Dashed ovals represent components of $T-z$.}
\label{z*_notin_fig}
\end{figure}

Now, consider the case when $z^* \notin V(T^{(z)})$, as depicted in
Figure~\ref{z*_notin_fig}. In this case, the definition of $\a$
implies that 
$$
\bigcup_{x \in V(T^{(z)})} \a(x) 
= \left(\bigcup_{x \in V(T^{(z)})} \beta(x)\right) \setminus \beta[z,z^*],
$$
where $[z,z^*]$ denotes the set of vertices on the path between
$z$ and $z^*$ in $T$, including both $z$ and $z^*$. Note that,
as a consequence of condition (VC2) of Definition~\ref{vcut_tree_def}, 
we have, for any $x \in V(T^{(z)})$, 
$\beta(x) \cap \beta[z,z^*] = \beta(x) \cap \beta(z)$. Hence,
$$
\left(\bigcup_{x \in V(T^{(z)})} \beta(x)\right) \setminus \beta[z,z^*]
= \left(\bigcup_{x \in V(T^{(z)})} \beta(x)\right) \setminus \beta(z),
$$
which proves the desired result.
\end{proof}

\begin{lemma}
The sets $\a(z)$, $z \in V(T)$, form a partition of $V(\cG)$.
\label{alpha_partition}
\end{lemma}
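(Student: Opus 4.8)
The plan is to show two things: that the sets $\a(z)$, $z\in V(T)$, cover $V(\cG)$, and that they are pairwise disjoint. Throughout I will use Lemma~\ref{alpha_lemma} and condition (VC1) of Definition~\ref{vcut_tree_def}.

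For the covering property, I would pick the component structure around $z^*$ (or argue globally): for any vertex $v\in V(\cG)$, by (VC1) there is some $z\in V(T)$ with $v\in\beta(z)$. Among all such $z$, choose one, call it $z_0$, that is closest to $z^*$ in $T$ (i.e.\ minimizing the distance $d_T(z,z^*)$). I claim $v\in\a(z_0)$. Indeed, $\a(z_0)=\beta(z_0)\setminus\beta(z_0,z^*]$, so it suffices to check $v\notin\beta(w)$ for every $w$ on the path from $z_0$ to $z^*$ other than $z_0$ itself. But any such $w$ is strictly closer to $z^*$ than $z_0$, so by the minimality of $z_0$ we cannot have $v\in\beta(w)$. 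Hence $v\in\a(z_0)$, and since $v$ was arbitrary, $\bigcup_{z\in V(T)}\a(z)=V(\cG)$. (Alternatively, one can derive the covering statement directly from Lemma~\ref{alpha_lemma}: applying it to the single component of $T-z'$ containing $z^*$, where $z'$ is a neighbour of a leaf, and iterating, but the closest-vertex argument is cleaner.)

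For pairwise disjointness, take two distinct vertices $x,y\in V(T)$ and suppose for contradiction that $v\in\a(x)\cap\a(y)$. Let $e=\{z_1,z_2\}$ be an edge on the path between $x$ and $y$, with $z_1$ on the same side as $x$ and $z_2$ on the same side as $y$; write $T^{(e)}$ for the component of $T\del e$ containing $x$ (hence $z_1$) and $\oT^{(e)}$ for the one containing $y$ (hence $z_2$). Now $z^*$ lies in exactly one of these two components, say $z^*\in\oT^{(e)}$; then $z^*\notin V(T^{(e)})$. Since $x\in V(T^{(e)})$ and $v\in\a(x)\subseteq\a(V(T^{(e)}))$, Lemma~\ref{alpha_lemma} (the second case, applied with $z=z_2$, noting $T^{(e)}=T_k^{(z_2)}$ for the appropriate $k$) gives $\a(V(T^{(e)}))=\beta(V(T^{(e)}))\setminus\beta(z_2)$, so in particular $v\notin\beta(z_2)$. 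But $z_2$ lies on the path between $x$ and $y$ in $T$, and $v\in\a(x)\cap\a(y)\subseteq\beta(x)\cap\beta(y)$, so by condition (VC2) we must have $v\in\beta(z_2)$ --- a contradiction. (The symmetric case $z^*\in V(T^{(e)})$ is handled identically by swapping the roles of $x$ and $y$.) Therefore $\a(x)\cap\a(y)=\emptyset$ whenever $x\neq y$.

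Combining the two parts, $\{\a(z):z\in V(T)\}$ is a partition of $V(\cG)$, which is the assertion of the lemma. I expect the main obstacle to be bookkeeping in the disjointness argument: one has to correctly identify which component of $T$ minus an intermediate vertex (or edge) contains $z^*$ and then invoke the right case of Lemma~\ref{alpha_lemma}; the subtlety is that $\a(x)\subseteq\beta(x)$ is needed to apply (VC2), and this inclusion is immediate from the definition of $\a$ only because $\a(z^*)=\beta(z^*)$ and $\a(z)=\beta(z)\setminus\beta(z,z^*]\subseteq\beta(z)$ otherwise. Everything else is routine set manipulation.
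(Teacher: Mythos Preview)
Your proof is correct. The two halves differ in interesting ways from the paper's argument, so a short comparison is worthwhile.

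For the covering part, the paper applies Lemma~\ref{alpha_lemma} to each component of $T-z^*$ to compute $\bigcup_{z\neq z^*}\a(z)$ explicitly, and then takes the union with $\a(z^*)=\beta(z^*)$. Your ``choose $z_0$ closest to $z^*$ with $v\in\beta(z_0)$'' argument is more elementary: it uses only (VC1) and the definition of $\a$, and avoids Lemma~\ref{alpha_lemma} entirely. This is a cleaner route.

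For disjointness, the roles are reversed. The paper works directly from the definition of $\a$: from $v\in\a(z)\cap\a(z')$ (with $z,z'\neq z^*$) it extracts that $v\notin\beta(y)$ for every $y\in(z,z^*]\cup(z',z^*]$, and then obtains a contradiction with (VC2) at a vertex on the $z$--$z'$ path. You instead pick an edge $e$ on the $x$--$y$ path, identify which side contains $z^*$, and invoke the second case of Lemma~\ref{alpha_lemma} to force $v\notin\beta(z_2)$, contradicting (VC2). Your argument handles the case where one of $x,y$ equals $z^*$ uniformly (the paper treats it separately), which is a small bonus. Both arguments ultimately hinge on (VC2); yours trades a direct unwinding of the definition of $\a$ for an appeal to Lemma~\ref{alpha_lemma}.
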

\begin{proof}
Let $T_i^{(z^*)}$, $i = 1,2,\ldots,\d$, denote the components
of $T - z^*$. Observe that, by Lemma~\ref{alpha_lemma},
$$
\bigcup_{z \neq z^*} \a(z) = \bigcup_{i=1}^\d \a(V(T_i^{(z^*)}))
= \bigcup_{i=1}^\d \beta(V(T_i^{(z^*)})) \setminus \beta(z^*) 
= \left(\bigcup_{z \neq z^*} \beta(z)\right) \setminus \beta(z^*).
$$
Therefore, 
$\bigcup_{z \in V(T)} \a(z) = \bigcup_{z \in V(T)} \beta(z) = V(\cG)$,
by condition (VC1) of Definition~\ref{vcut_tree_def}.

Next, we want to show that $\a(z) \cap \a(z') = \emptyset$ for $z \neq z'$.
This is true by definition of $\a$ if either $z = z^*$ or $z' = z^*$. 
So, we henceforth assume that $z$ and $z'$ are distinct vertices 
in $V(T) \setminus \{z^*\}$.

Suppose that there exists a $v \in \a(z) \cap \a(z')$. We then have
\begin{itemize}
\item[(1)] $v \in \beta(z)$, but $v \notin \beta(y)$ for any $y \in (z,z^*]$; 
and
\item[(2)] $v \in \beta(z')$, but $v \notin \beta(y)$ for any $y \in (z',z^*]$.
\end{itemize}
In particular, we see that $v \notin \beta(z^*)$. It also follows
from (1) and (2) that $z' \notin (z,z^*]$, and $z \notin (z',z^*]$, 
which together imply that $z^*$ lies on the path between $z$ and $z'$. 
However, by (VC2), this means that $v \in \beta(z^*)$, a contradiction.
\end{proof}

We now construct a tree decomposition $(T,\gm)$ of the index set, $I$,
of $\cC$, by defining an index mapping $\gm:I \to V(T)$ as follows:
for each $i \in I$, set $\gm(i) = z$ if $\omega(i) \in \a(z)$.
In other words, for each $z \in V(T)$, $\gm^{-1}(z) = \omega^{-1}(\a(z))$.
Since the sets $\a(z)$, $z \in V(T)$, form a partition of $V(\cG)$,
the mapping $\gm$ is well-defined. 

\begin{example}
Suppose that $\cC$ is a code of length 10, defined on the index set 
$I = \{1,2,\ldots,10\}$. For the graph $\cG$ shown in 
Figure~\ref{vcut_trees}(a), consider the graph decomposition $(\cG,\omega)$
of $\cC$ defined by $\omega(1) = \omega(2) = A$, 
$\omega(3) = \omega(4) = \omega(5) = B$, $\omega(6) = \omega(7) = E$,
$\omega(8) = F$, and $\omega(9) = \omega(10) = H$. Let $(T,\beta)$
be the vertex-cut tree of $\cG$ in Figure~\ref{vcut_trees}(b), 
from which we obtain the mapping $\a:V(T) \rightarrow 2^{V(\cG)}$ 
depicted in Figure~\ref{alpha_fig}. Based on the last figure, we will 
use the labels $z^*$, $A$, $D$, $G$, $F$ and $H$ to 
identify the vertices of the tree $T$. Then, the index mapping 
$\gm:I \to V(T)$ is given by $\gm(1) = \gm(2) = A$,
$\gm(3) = \gm(4) = \gm(5) = \gm(6) = \gm(7) = z^*$,
$\gm(8) = F$, and $\gm(9) = \gm(10) = H$. 

\label{gamma_example}
\end{example}

Note that our construction of the tree decomposition $(T,\gm)$ 
depends on the choice of the vertex $z^*$, via the mapping $\a$.
Up to this point, our choice of $z^*$ was arbitrary. 
We now specify $z^* \in V(T)$ to be such that 
$m(z^*) = \mu(\cC;\omega,\beta)$. We claim that for the tree decomposition 
$(T,\gm)$ arising from such a choice of $z^*$, we have
$\mu(\cC;\omega,\beta) = \k(\cC;T,\gamma)$.

\begin{proposition}
Given a graph decomposition $(\cG,\omega)$ of a code $\cC$,
and a vertex-cut tree $(T,\beta)$ of $\cG$, there exists a
tree decomposition $(T,\gamma)$ of $\cC$ such that 
$\mu(\cC;\omega,\beta) = \k(\cC;T,\gamma)$.
\label{mu_k_prop}
\end{proposition}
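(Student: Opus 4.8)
The plan is to verify, for the tree decomposition $(T,\gm)$ constructed from the specific choice of $z^*$ satisfying $m(z^*) = \mu(\cC;\omega,\beta)$, that $\k(\cC;T,\gm) = m(z^*)$. Since $\k(\cC;T,\gm) = \k(\cM(\cC;T,\gm)) = \max_{z \in V(T)} \dim(C_z^*)$, and the dimensions $\dim(C_z^*)$ are given explicitly by formula (\ref{dimCv*}), the task reduces to showing two things: first, that for \emph{every} $z \in V(T)$, the quantity $\dim(C_z^*) = \dim(\cC) - \sum_{i=1}^{\d} \dim(\cC_{K_i})$ (with $K_i = \gm^{-1}(V(T_i^{(z)}))$) is at most $m(z)$; and second, that for the chosen vertex $z^*$ we have equality, $\dim(C_{z^*}^*) = m(z^*)$. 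Combining these with $m(z^*) = \mu(\cC;\omega,\beta) = \max_z m(z)$ gives $\k(\cC;T,\gm) = \max_z \dim(C_z^*) \le \max_z m(z) = m(z^*) = \dim(C_{z^*}^*) \le \k(\cC;T,\gm)$, forcing equality throughout.

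First I would unwind the definitions to relate the index sets appearing in the two formulas. Fix $z \in V(T)$ of degree $\d$, with components $T_1^{(z)},\ldots,T_\d^{(z)}$ of $T - z$. On the graph side, recall $V_i = \beta(V(T_i^{(z)})) \setminus \beta(z)$, so $J_i = \omega^{-1}(V_i)$ and $m(z) = \dim(\cC) - \sum_i \dim(\cC_{J_i})$. On the tree side, $\gm^{-1}(z') = \omega^{-1}(\a(z'))$, so $K_i := \gm^{-1}(V(T_i^{(z)})) = \omega^{-1}(\a(V(T_i^{(z)})))$. Now apply Lemma~\ref{alpha_lemma}: for at most one value of $i$ (the one whose component contains $z^*$) we get $\a(V(T_i^{(z)})) = \beta(V(T_i^{(z)}))$, which \emph{contains} $V_i$; for all other $i$ we get $\a(V(T_i^{(z)})) = \beta(V(T_i^{(z)})) \setminus \beta(z) = V_i$ exactly. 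Hence $K_i \supseteq J_i$ for every $i$, and therefore $\dim(\cC_{K_i}) \ge \dim(\cC_{J_i})$ by the monotonicity of cross-section dimension under set inclusion (established in Section~\ref{code_section}). This immediately yields $\dim(C_z^*) = \dim(\cC) - \sum_i \dim(\cC_{K_i}) \le \dim(\cC) - \sum_i \dim(\cC_{J_i}) = m(z)$, giving the first claim for all $z$.

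For the equality at $z = z^*$, note that $z^* \notin V(T_i^{(z^*)})$ for \emph{every} component $T_i^{(z^*)}$ of $T - z^*$; hence Lemma~\ref{alpha_lemma} gives $\a(V(T_i^{(z^*)})) = \beta(V(T_i^{(z^*)})) \setminus \beta(z^*) = V_i$ for all $i$, so $K_i = J_i$ and thus $\dim(C_{z^*}^*) = m(z^*)$ on the nose. The remaining loose end is to confirm that $(T,\gm)$ is genuinely a tree decomposition of $\cC$ — i.e.\ that $\gm$ is a well-defined index mapping $I \to V(T)$ — but this is already noted in the text immediately before the proposition: the sets $\a(z)$ partition $V(\cG)$ by Lemma~\ref{alpha_partition}, so each $i \in I$ lands in exactly one $\a(z)$. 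The main obstacle, if any, is the bookkeeping in the first step: one must be careful that Lemma~\ref{alpha_lemma} is applied with the correct orientation (which component "points toward" $z^*$) and that the inclusion $K_i \supseteq J_i$ — rather than an equality one might naively expect — is what actually holds; everything else is a short deduction from formula (\ref{dimCv*}) and cross-section monotonicity.
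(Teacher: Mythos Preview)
Your proposal is correct and follows essentially the same route as the paper's own proof: pick $z^*$ achieving $\mu(\cC;\omega,\beta)$, build $(T,\gm)$ via the mapping $\alpha$, and use Lemma~\ref{alpha_lemma} to show $J_i \subseteq K_i$ for every $z$ (with equality at $z^*$), so that $\dim(C_z^*) \le m(z)$ with equality at $z^*$. The paper's argument and yours differ only in notation (it writes $k(z)$ for your $\dim(C_z^*)$) and in the level of detail you give about which component contains $z^*$.
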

\begin{proof}
Pick a $z^* \in V(T)$ such that $m(z^*) = \mu(\cC;\omega,\beta)$,
and construct the tree decomposition $(T,\gm)$ as described above.
Note that, by (\ref{dimCv*}),
$$
\k(\cC;T,\gm) = \k(\cM(\cC;T,\gm)) = \max_{z \in V(T)} k(z),
$$
where, for a vertex $z \in V(T)$ of degree $\d$, 
$k(z) \define \dim(\cC) - \sum_{i=1}^\d \dim(\cC_{K_i})$,
with $K_i = \gm^{-1}\left(V(T_i^{(z)})\right)$ for $i = 1,2,\ldots,\d$.
Recall from the definition of $\gm$ that 
$\gm^{-1}(z) = \omega^{-1}(\a(z))$ for any $z \in V(T)$. 
Hence, $K_i = \omega^{-1}\left(\a(V(T_i^{(z)}))\right)$, $i = 1,2,\ldots,\d$.

To prove the proposition, we show that $k(z) \leq m(z)$ for all 
$z \in V(T)$, with equality holding if $z = z^*$. From (\ref{m_def}), we 
see that $m(z) = \dim(\cC) - \sum_{i=1}^\d \dim(\cC_{J_i})$, where,
for  $i = 1,2,\ldots,\d$. 
$J_i = \omega^{-1}\left(\beta(V(T_i^{(z)})) \setminus \beta(z)\right)$.
We must therefore show that, for each $z \in V(T)$, we have
$J_i \subseteq K_i$ for all $i$, which would prove that $k(z) \leq m(z)$;
and furthermore, when $z = z^*$, we have $J_i = K_i$ for all $i$,
which would prove that $k(z^*) = m(z^*)$.

So, consider any $z \in V(T)$. From Lemma~\ref{alpha_lemma}, we see
that if $T_i^{(z)}$ is any component of $T-z$, then 
\beq
\beta(V(T_i^{(z)})) \setminus \beta(z) \subseteq \a(V(T_i^{(z)}))
\label{J_subset_K}
\eeq
Hence, $J_i \subseteq K_i$. Moreover, when $z = z^*$, we always have
$z^* \notin V(T_i^{(z)})$, and so, again by Lemma~\ref{alpha_lemma}, 
equality holds in (\ref{J_subset_K}), implying that $J_i = K_i$.
\end{proof}

From Propositions~\ref{k_Gamma_lobnd} and \ref{mu_k_prop}, 
we obtain the following theorem.

\begin{theorem}
Let $(\cG,\omega)$ be a graph decomposition of a code $\cC$,
and let $(T,\beta)$ be a vertex-cut tree of $\cG$. 
Then, there exists a tree decomposition $(T,\gm)$ of $\cC$ such that,
for any graphical realization $\G \in \mfR(\cC;\cG,\omega)$, we have
$$
\k(\G) \geq \frac{\k(\cC;T,\gm)}{\text{\em vc-width}(T,\beta)}.
$$
Hence, $\k(\cC;\cG,\omega) 
\geq \frac{\k(\cC;T,\gm)}{\text{\em vc-width}(T,\beta)}$.
\label{lobnd_thm1}
\end{theorem}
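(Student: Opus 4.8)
The plan is to read the statement off directly from the two preceding propositions; no new construction is required. First I would invoke Proposition~\ref{mu_k_prop} to produce, from the given data $(\cG,\omega)$ and $(T,\beta)$, a tree decomposition $(T,\gm)$ of $\cC$ with $\mu(\cC;\omega,\beta) = \k(\cC;T,\gm)$ --- concretely, the one obtained from the mapping $\a$ built around a vertex $z^*$ chosen so that $m(z^*) = \mu(\cC;\omega,\beta)$. This is exactly the tree decomposition whose existence the theorem asserts.

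Next, for an arbitrary realization $\G \in \mfR(\cC;\cG,\omega)$, Proposition~\ref{k_Gamma_lobnd} gives $\k(\G) \geq \mu(\cC;\omega,\beta)/\text{vc-width}(T,\beta)$. Substituting the equality from the previous step yields $\k(\G) \geq \k(\cC;T,\gm)/\text{vc-width}(T,\beta)$, which is the first displayed inequality; note that its right-hand side does not depend on $\G$.

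For the ``hence'' clause I would take the minimum of both sides over all $\G \in \mfR(\cC;\cG,\omega)$ and recall the definition~(\ref{kCG_def}), namely $\k(\cC;\cG,\omega) = \min_{\G \in \mfR(\cC;\cG,\omega)} \k(\G)$. Since the lower bound just derived is uniform in $\G$, it survives the minimization, giving $\k(\cC;\cG,\omega) \geq \k(\cC;T,\gm)/\text{vc-width}(T,\beta)$, as required.

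There is essentially no obstacle remaining at this stage: the substance has already been spent, first in establishing the Vertex-Cut Bound (Theorem~\ref{vertex-cut_bound}) and its averaging consequence Proposition~\ref{k_Gamma_lobnd}, and then in the bookkeeping of Proposition~\ref{mu_k_prop}, where the auxiliary mapping $\a$, Lemmas~\ref{alpha_lemma} and \ref{alpha_partition}, and the choice of $z^*$ together convert $\mu(\cC;\omega,\beta)$ into the $\k$-complexity of a genuine tree decomposition. The only point to verify carefully when writing the final proof is that the inequality of Proposition~\ref{k_Gamma_lobnd} indeed holds for \emph{every} realization extending $(\cG,\omega)$, which legitimizes the final minimization; this is immediate from the statement of that proposition.
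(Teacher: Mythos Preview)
Your proposal is correct and matches the paper's own approach exactly: the theorem is stated immediately after Propositions~\ref{k_Gamma_lobnd} and~\ref{mu_k_prop} with the single remark that it follows from combining them, and your write-up spells out precisely that combination (invoke Proposition~\ref{mu_k_prop} to obtain $(T,\gm)$ with $\mu(\cC;\omega,\beta)=\k(\cC;T,\gm)$, substitute into Proposition~\ref{k_Gamma_lobnd}, then minimize over $\G$ using (\ref{kCG_def})).
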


The theorem above is a fundamental result with several important consequences, 
some of which we present here. The first of these is a corollary that 
gives a lower bound on the $\k$-complexity of 
\emph{any} realization of a given code $\cC$ on a graph $\cG$.

\begin{corollary}
For a code $\cC$ and a connected graph $\cG$, we have
$$
\k(\cC;\cG) \geq \frac{\k_{\text{\scriptsize\emph{tree}}}(\cC)}
{\k_{\text{\scriptsize\emph{vc-tree}}}(\cG)}
\geq \frac{\k_{\text{\scriptsize\emph{tree}}}(\cC)}
{\k_{\text{\scriptsize\emph{tree}}}(\cG) + 1}.
$$
\label{cor1}
\end{corollary}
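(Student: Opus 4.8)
The plan is to read Corollary~\ref{cor1} off Theorem~\ref{lobnd_thm1} by instantiating the vertex-cut tree optimally. First I would fix an arbitrary index mapping $\omega : I \to V(\cG)$, where $I$ is the index set of $\cC$; since $\k(\cC;\cG) = \min_\omega \k(\cC;\cG,\omega)$ by (\ref{kG_def}), it suffices to establish the bound $\k(\cC;\cG,\omega) \geq \k_\tree(\cC)/\k_\vctree(\cG)$ for every such $\omega$ and then take the minimum over $\omega$ at the very end.

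With $\omega$ fixed, I would use Definition~\ref{vcwidth_def} to pick a vertex-cut tree $(T,\beta)$ of $\cG$ that attains the vc-treewidth, i.e.\ with $\text{vc-width}(T,\beta) = \k_\vctree(\cG)$. Applying Theorem~\ref{lobnd_thm1} to the graph decomposition $(\cG,\omega)$ and this $(T,\beta)$ then produces a tree decomposition $(T,\gm)$ of $\cC$ satisfying $\k(\cC;\cG,\omega) \geq \k(\cC;T,\gm)/\k_\vctree(\cG)$. The last ingredient is the observation that $\k_\tree(\cC)$ is, by its definition as $\k(\cC;\mfT)$ together with (\ref{kmfG_def}) and (\ref{kG_def}), precisely the least value of $\k(\cC;T',\omega')$ over \emph{all} tree decompositions $(T',\omega')$ of $\cC$; in particular $\k(\cC;T,\gm) \geq \k_\tree(\cC)$. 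Chaining these gives $\k(\cC;\cG,\omega) \geq \k_\tree(\cC)/\k_\vctree(\cG)$, and taking the minimum over $\omega$ yields the first inequality of the corollary.

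The second inequality is then immediate from Lemma~\ref{tree_vctree_lemma}, which gives $\k_\vctree(\cG) \leq \k_\tree(\cG) + 1$, so that $\k_\tree(\cC)/\k_\vctree(\cG) \geq \k_\tree(\cC)/(\k_\tree(\cG)+1)$. I do not anticipate a genuine obstacle here, as the argument is a short chain of inequalities; the one point that requires care is the bookkeeping. The symbol $\k_\tree$ is overloaded, denoting both the treewidth of the \emph{code} $\cC$ and the graph-theoretic treewidth of the \emph{graph} $\cG$, so one must keep the two roles distinct. More substantively, one must perform the minimization over $\omega$ implicit in $\k(\cC;\cG)$ only \emph{after} passing from $\k(\cC;T,\gm)$ down to $\k_\tree(\cC)$, since the optimal vertex-cut tree $(T,\beta)$ — and hence the induced tree decomposition $(T,\gm)$ — is allowed to depend on the particular $\omega$ under consideration; reversing the order of these steps would not be justified by the results quoted above.
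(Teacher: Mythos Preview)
Your proposal is correct and follows the paper's own proof essentially step for step: fix $\omega$, choose an optimal vertex-cut tree $(T,\beta)$ of $\cG$, apply Theorem~\ref{lobnd_thm1} to obtain $\k(\cC;\cG,\omega) \geq \k(\cC;T,\gm)/\k_\vctree(\cG) \geq \k_\tree(\cC)/\k_\vctree(\cG)$, minimize over $\omega$, and then invoke Lemma~\ref{tree_vctree_lemma} for the second inequality. The only minor imprecision in your commentary is that the optimal $(T,\beta)$ depends on $\cG$ alone, not on $\omega$; it is only the induced index mapping $\gm$ that varies with $\omega$.
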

\begin{proof}
Consider an arbitrary index mapping $\omega:I \to V(\cG)$, where $I$ is
the index set of $\cC$. Let $(T,\beta)$ be an \emph{optimal} vertex-cut
tree of $\cG$, by which we mean that $\text{vc-width}(T,\beta) 
= \k_\vctree(\cG)$. By Theorem~\ref{lobnd_thm1}, there exists 
a tree decomposition $(T,\gm)$ of $\cC$ such that 
$$
\k(\cC;\cG,\omega) \geq \frac{\k(\cC;T,\gm)}{\text{vc-width}(T,\beta)}
\geq \frac{\k_\tree(\cC)}{\k_\vctree(\cG)}.
$$
Minimizing over all possible mappings $\omega: I \to V(\cG)$,
we obtain 
$$
\k(\cC;\cG) \geq \frac{\k_\tree(\cC)}{\k_\vctree(\cG)}.
$$
From Lemma~\ref{tree_vctree_lemma}, we also have 
$\k_\vctree(\cG) \leq \k_\tree(\cG) + 1$.
\end{proof}

If, in the above proof, we take $(T,\beta)$ to be an optimal vertex-cut 
path instead, \emph{i.e.}, take $(T,\beta)$ to be a vertex-cut path such
that $\text{vc-width}(T,\beta) = \k_\vcpath(\cG)$, 
then we obtain the next corollary.

\begin{corollary}
For a code $\cC$ and a connected graph $\cG$, we have
$$
\k(\cC;\cG) \geq \frac{\k_{\text{\scriptsize\emph{path}}}(\cC)}
{\k_{\text{\scriptsize\emph{vc-path}}}(\cG)}
\geq \frac{\k_{\text{\scriptsize\emph{path}}}(\cC)}
{\k_{\text{\scriptsize\emph{path}}}(\cG) + 1}.
$$
\label{cor2}
\end{corollary}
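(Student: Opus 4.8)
The plan is to re-run the argument of Corollary~\ref{cor1} essentially verbatim, replacing vertex-cut \emph{trees} by vertex-cut \emph{paths}, exactly as the remark preceding the statement suggests. First I would fix an arbitrary index mapping $\omega:I \to V(\cG)$, where $I$ is the index set of $\cC$, and choose an \emph{optimal} vertex-cut path $(T,\beta)$ of $\cG$, meaning that $T$ is a path and $\text{vc-width}(T,\beta) = \k_\vcpath(\cG)$. Such a $(T,\beta)$ exists because vertex-cut paths always exist (as noted after Definition~\ref{vcut_tree_def}), so the minimizing vc-width is attained by some vertex-cut path.

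The crucial observation is that the tree decomposition $(T,\gm)$ of $\cC$ produced by Theorem~\ref{lobnd_thm1} is built on the \emph{same} underlying tree $T$ as $(T,\beta)$: only the index mapping changes, from $\omega$ to $\gm$. Hence, when $T$ is a path, $(T,\gm)$ is a \emph{path decomposition} of $\cC$. Since $\k_\path(\cC) = \k(\cC;\mfP)$ is by definition the least value of $\k(\cC;P,\omega')$ over all path decompositions $(P,\omega')$ of $\cC$, this gives $\k(\cC;T,\gm) \geq \k_\path(\cC)$. Combining this with the bound $\k(\G) \geq \k(\cC;T,\gm)/\text{vc-width}(T,\beta)$ of Theorem~\ref{lobnd_thm1}, valid for every $\G \in \mfR(\cC;\cG,\omega)$, I would conclude $\k(\cC;\cG,\omega) \geq \k_\path(\cC)/\k_\vcpath(\cG)$.

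Finally I would minimize over all index mappings $\omega:I \to V(\cG)$; since the right-hand side is independent of $\omega$, this yields $\k(\cC;\cG) \geq \k_\path(\cC)/\k_\vcpath(\cG)$, the first inequality. For the second inequality, I would invoke the last assertion of Lemma~\ref{tree_vctree_lemma}, namely $\k_\vcpath(\cG) \leq \k_\path(\cG)+1$; since the common numerator $\k_\path(\cC)$ is nonnegative and both denominators are positive, enlarging the denominator only decreases the fraction.

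As for the main obstacle, there is essentially none of substance: all the real work is already contained in Theorem~\ref{lobnd_thm1} and Lemma~\ref{tree_vctree_lemma}. The one point that deserves a sentence of care is precisely the claim above, that the construction underlying Theorem~\ref{lobnd_thm1} (the passage from $(T,\beta)$ to $(T,\gm)$ in Section~\ref{lobnds_section}) leaves the tree $T$ unchanged, so that being a path is inherited by $(T,\gm)$. This is immediate from the definition of $\gm$, since $\gm:I \to V(T)$ is defined on the vertex set of the \emph{same} tree $T$ via $\gm^{-1}(z) = \omega^{-1}(\a(z))$.
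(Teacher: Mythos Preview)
Your proposal is correct and is precisely the argument the paper intends: it re-runs the proof of Corollary~\ref{cor1} with an optimal vertex-cut path in place of an optimal vertex-cut tree, and you have correctly isolated the one point that needs noting, namely that the construction of $(T,\gm)$ in Theorem~\ref{lobnd_thm1} keeps the underlying tree $T$ fixed, so $(T,\gm)$ is a path decomposition of $\cC$ and hence $\k(\cC;T,\gm)\geq\k_\path(\cC)$. The second inequality via Lemma~\ref{tree_vctree_lemma} is handled exactly as in the paper.
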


The (first) inequality above can be tight, in the sense that
there are examples of codes $\cC$ and graphs $\cG$ for which 
$\k(\cC;\cG) = \lceil\frac{\k_\path(\cC)}{\k_\vcpath(\cG)}\rceil$.

\begin{example}
Take $\cC$ to be the $[24,12,8]$ binary Golay code, 
and let $\mfG$ be the family of graphs consisting of all $n$-cycles.
From Example~\ref{cycle_example}, we know that 
$\k_{\text{\scriptsize\emph{vc-path}}}(\cG) = 2$
for any $\cG \in \mfG$. It is also known that 
$\k_{\text{\scriptsize\emph{path}}}(\cC) = 9$ 
\cite[Example~1]{For03},\cite[Section~5]{vardy}.
Hence, by the bound of Corollary~\ref{cor2},
noting that $\k(\cC;\cG)$ must be an integer, we have 
$\k(\cC;\cG) \geq 5$ for any $\cG \in \mfG$. Thus, $\k(\cC;\mfG) \geq 5$.
The tailbiting trellis realization of the Golay code given in 
\cite{CFV99} has $\k$-complexity equal to 5, 
from which we conclude that $\k(\cC;\mfG) = 5$.
\label{golay_example}
\end{example}

Using Corollary~\ref{cor2} as a starting point, we derive a 
lower bound on the treewidth of an $[n,k,d]$ linear code. 
For this, we will also need the following result 
\cite[Theorem~7.1]{BK96}: if $\cG$ is a graph with treewidth 
at most $\ell$, then the pathwidth of $\cG$ is at most 
$(\ell+1) \log_2 |V(\cG)|$. 

\begin{proposition}
For an $[n,k,d]$ linear code $\cC$, with $n > 1$,
$$
\k_{\text{\scriptsize\emph{tree}}}(\cC) \geq 
\frac{\k_{\text{\scriptsize\emph{path}}}(\cC)}{3 + 2\log_2(n-1)} 
\geq \frac{k(d-1)}{n(3 + 2\log_2(n-1))}.
$$
\label{treewidth_bnd}
\end{proposition}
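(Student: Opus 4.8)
The plan is to prove the two stated inequalities separately: the first by specializing Corollary~\ref{cor2} to a tree that is optimal for $\cC$ and then bounding that tree's pathwidth, and the second by quoting the Lafourcade--Vardy bound mentioned in Section~\ref{intro}.

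For the first inequality, recall (see \cite{For03} and the discussion of treewidth in Section~\ref{complexity_section}) that $\k_\tree(\cC)$ is attained by a tree decomposition $(T,\omega)$ in which $T$ is a cubic tree and $\omega$ is a bijection between the index set of $\cC$ and the set of leaves of $T$; in particular, for this $T$ we have $\k(\cC;T) = \k_\tree(\cC)$. A cubic tree with $n$ leaves has $n-2$ internal nodes (a standard handshaking count: if there are $I$ internal nodes, all of degree $3$, then $n + 3I = 2(n+I-1)$, so $I = n-2$), hence $|V(T)| = 2(n-1)$. Viewed as a graph, $T$ has treewidth $1$, so the cited result \cite[Theorem~7.1]{BK96} gives $\k_\path(T) \le 2\log_2|V(T)| = 2\log_2\bigl(2(n-1)\bigr) = 2 + 2\log_2(n-1)$, and therefore $\k_\path(T)+1 \le 3 + 2\log_2(n-1)$. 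Applying Corollary~\ref{cor2} with $\cG = T$ then yields
$$
\k_\tree(\cC) = \k(\cC;T) \ge \frac{\k_\path(\cC)}{\k_\path(T)+1} \ge \frac{\k_\path(\cC)}{3+2\log_2(n-1)},
$$
which is the first inequality.

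For the second inequality, recall that $\k_\path(\cC)$ equals $b(\cC)$, the least $\k$-complexity of any conventional trellis realization of $\cC$ or of any of its coordinate permutations (the coordinate-permutation freedom being absorbed into the choice of bijection $\omega$, using \cite[Theorem~6.3]{vardy} to restrict to bijective $\omega$). The Lafourcade--Vardy lower bound \cite{LV95} then gives $\k_\path(\cC) = b(\cC) \ge k(d-1)/n$ for an $[n,k,d]$ code, and substituting this into the first inequality gives the claimed chain.

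The step requiring the most care is the selection of the tree $T$: one needs $T$ for which $\k(\cC;T)$ genuinely equals $\k_\tree(\cC)$ (not merely bounds it from above), and simultaneously for which $|V(T)|$ is as small as possible --- specifically $2(n-1)$ rather than something larger --- since the logarithmic factor in the bound is $2\log_2|V(T)|$. Both requirements are met exactly by the cubic-tree-with-bijectively-labelled-leaves normal form of \cite{For03}. Once $T$ is fixed, the remainder is a direct substitution into Corollary~\ref{cor2} and the two cited bounds, with the only arithmetic being the identity $2\log_2\bigl(2(n-1)\bigr) = 2 + 2\log_2(n-1)$.
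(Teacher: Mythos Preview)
Your proof is correct and follows essentially the same route as the paper's own argument: choose an optimal cubic tree decomposition $(T,\omega)$ with $n$ leaves and $2n-2$ vertices, bound $\k_\path(T)$ via \cite{BK96}, apply Corollary~\ref{cor2} with $\cG=T$, and then invoke \cite{LV95} for the second inequality. The only cosmetic difference is that the paper writes the chain as $\k(\cC;T,\omega)\ge\k(\cC;T)\ge\cdots$ rather than arguing that $\k(\cC;T)=\k_\tree(\cC)$ directly, and it notes in a footnote that \cite{LV95} literally bounds state max-complexity, which is dominated by $\k$-complexity.
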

\begin{proof}
The second inequality above is due to the fact, shown in \cite{LV95},
that\footnote{The result in \cite{LV95} is only explicitly stated 
as a lower bound on the state max-complexity of any conventional 
trellis realization of $\cC$. However, the state max-complexity 
of a graphical realization can never exceed the 
constraint max-complexity of the realization, as noted in
Section~\ref{complexity_section}.}
for an $[n,k,d]$ linear code $\cC$, $\k_\path(\cC) \geq k(d-1)/n$. 

The first inequality is proved as follows. Let $(T,\omega)$ 
be an optimal tree decomposition of $\cC$, so that 
$\k(\cC;T,\omega) = \k_\tree(\cC)$. As mentioned in 
Section~\ref{complexity_section}, $(T,\omega)$ may be chosen 
so that $T$ is a cubic tree, and $\omega$ is a bijection between 
the index set of $\cC$ and the leaves of $T$. Thus, $T$ is a cubic tree with
$n$ leaves, from which it follows that $T$ has $n-2$ internal nodes.
Hence, $|V(T)| = 2n-2$. Since $T$ has treewidth equal to one,
by the result from \cite{BK96} quoted earlier, 
$\k_\path(T) \leq 2 \log_2(2n-2)$.

We now have, via Corollary~\ref{cor2}, 
$$
\k(\cC;T,\omega) \geq \k(\cC;T) \geq \frac{\k_\path(\cC)}{\k_\path(T) + 1}
\geq \frac{\k_\path(\cC)}{2\log_2(2n-2) + 1},
$$
which proves the proposition.
\end{proof}

In particular, Proposition~\ref{treewidth_bnd} shows that for any
linear code $\cC$ of length $n$,
$$
\frac{\k_\path(\cC)}{\k_\tree(\cC)} = O(\log_2 n).
$$
This estimate of the ratio $\frac{\k_\path(\cC)}{\k_\tree(\cC)}$ is
the best possible\footnote{It was conjectured in \cite{K2} that 
for codes $\cC$ of length $n$, $\k_\path(\cC) - \k_\tree(\cC) = O(\log n)$,
but we now do not believe this to be true.},
up to the constant implicit in the $O$-notation.
Indeed, it was shown in \cite{K2} that a sequence of codes $\cC^{(i)}$ 
($i = 1,2,3,\ldots$), with length $n_i = 12(2^i - 1) + 2$, 
$\k_\tree(\cC^{(i)}) = 2$ and $\k_\path(\cC^{(i)}) \geq \frac{1}{2}(i+3)$, 
can be constructed over any finite field $\F$. 
It is clear that $\frac{\k_\path(\cC^{(i)})}{\k_\tree(\cC^{(i)})}$ 
grows logarithmically with codelength $n_i$. 

\medskip

A less precise formulation of the second inequality in 
Proposition~\ref{treewidth_bnd} is also instructive: there
exists a constant $c_0 > 0$ such that, for any $[n,k,d]$
linear code $\cC$, with $n > 1$,
\beq
\k_\tree(\cC) \geq c_0 \frac{k \, d}{n \, \log_2n}.
\label{imprecise_bnd}
\eeq
A noteworthy implication of the above inequality is that
code families of bounded treewidth are not very good from
an error-correcting perspective. Given an integer $t > 0$,
denote by $\TW(t)$ the family of all codes over $\F$
of treewidth at most $t$. A code family $\mfC$ is called 
\emph{asymptotically good} if there exists a sequence of 
$[n_i,k_i,d_i]$ codes $\cC^{(i)} \in \mfC$, with 
$\lim_i n_i = \infty$, such that $\liminf_i k_i/n_i$ and 
$\liminf_i d_i/n_i$ are both strictly positive. The following
result, an easy consequence of (\ref{imprecise_bnd}), resolves 
a conjecture in \cite{K2}.

\begin{corollary}
Let $\cC^{(i)}$, $i=1,2,3$, be any sequence of $[n_i,k_i,d_i]$ codes 
such that 
$$\lim_{i \to \infty} 
\k_{\text{\scriptsize\emph{tree}}}(\cC^{(i)})\frac{\log n_i}{n_i} = 0.$$
Then, either $\lim_{i \to \infty} k_i/n_i = 0$ 
or $\lim_{i\to\infty} d_i/n_i = 0$. In particular, for any $t > 0$, 
the code family $\TW(t)$ is not asymptotically good.
\label{cor3}
\end{corollary}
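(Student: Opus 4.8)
The plan is to derive everything from the imprecise bound (\ref{imprecise_bnd}), namely $\k_\tree(\cC) \geq c_0 \, kd/(n\log_2 n)$ for $[n,k,d]$ codes with $n>1$. First I would apply this bound to the given sequence $\cC^{(i)}$, obtaining, for all $i$ large enough that $n_i > 1$,
\beq
\k_\tree(\cC^{(i)}) \geq c_0 \, \frac{k_i \, d_i}{n_i \log_2 n_i}.
\label{cor3_applied}
\eeq
Rearranging, this gives $\dfrac{k_i}{n_i} \cdot \dfrac{d_i}{n_i} \leq \dfrac{1}{c_0} \, \k_\tree(\cC^{(i)}) \dfrac{\log_2 n_i}{n_i}$. (To be careful about the base of the logarithm versus the $\log$ appearing in the hypothesis, note that $\log_2 n_i = \log n_i / \log 2$, so the two differ only by a positive multiplicative constant, which I can absorb into $c_0$.)

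Next I would invoke the hypothesis $\lim_{i\to\infty} \k_\tree(\cC^{(i)}) \log n_i / n_i = 0$. Combined with (\ref{cor3_applied}), this forces
$$
\lim_{i\to\infty} \frac{k_i}{n_i} \cdot \frac{d_i}{n_i} = 0.
$$
Since both factors $k_i/n_i$ and $d_i/n_i$ lie in $[0,1]$, and since if a product of two bounded nonnegative sequences tends to zero then at least one of them must have a subsequence tending to zero, I conclude that $\liminf_i k_i/n_i = 0$ or $\liminf_i d_i/n_i = 0$. Strictly speaking the corollary as stated asserts $\lim k_i/n_i = 0$ or $\lim d_i/n_i = 0$; if one wants the full limit statement rather than just a liminf, one observes that failure of $\lim d_i/n_i = 0$ means there is a subsequence along which $d_i/n_i$ is bounded away from $0$, and along that subsequence the product tending to $0$ forces $k_i/n_i \to 0$ on it — a slightly more delicate bookkeeping, but I would most naturally just state the conclusion in the liminf form matching the definition of ``asymptotically good'' (which only involves $\liminf$).

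For the second assertion, suppose $\cC^{(i)} \in \TW(t)$ for all $i$, with $\lim_i n_i = \infty$. Then $\k_\tree(\cC^{(i)}) \leq t$ for all $i$, so $\k_\tree(\cC^{(i)}) \log n_i / n_i \leq t \log n_i / n_i \to 0$, and the first part applies: $\liminf_i k_i/n_i = 0$ or $\liminf_i d_i/n_i = 0$, so $\TW(t)$ contains no sequence witnessing asymptotic goodness, i.e.\ $\TW(t)$ is not asymptotically good. The only mild subtlety I anticipate — really the only ``obstacle'' in an otherwise routine argument — is the elementary real-analysis step of passing from ``product of two $[0,1]$-valued sequences tends to $0$'' to the disjunction on the individual sequences, and making sure the statement is phrased in terms of $\liminf$ so it matches the definition of asymptotically good; everything else is a direct substitution into (\ref{imprecise_bnd}).
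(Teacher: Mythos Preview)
Your proposal is correct and follows exactly the route the paper intends: the corollary is stated there as ``an easy consequence of (\ref{imprecise_bnd}),'' and your argument is precisely that easy consequence. Your observation that the product $\frac{k_i}{n_i}\cdot\frac{d_i}{n_i}\to 0$ only forces one of the two $\liminf$'s to vanish (rather than a full limit) is a legitimate technical refinement of the statement as written, and is indeed the form needed to match the definition of ``asymptotically good.''
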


In fact, a more general result is true. For a fixed integer $\ell > 0$, 
let $\mfG_\ell$ denote the family of all graphs with vc-treewidth 
at most $\ell$. In particular, note that, by Lemma~\ref{tree_vctree_lemma},
$\mfG_\ell$ contains all graphs with treewidth at most $\ell-1$. 
Then, for an $[n,k,d]$ linear code $\cC$ with $n > 1$, we have, 
via (\ref{kmfG_def}), Corollary~\ref{cor1} and Proposition~\ref{treewidth_bnd},
$$
\k(\cC;\mfG_\ell) \geq \frac{\k_\tree(\cC)}{\ell}
\geq \frac{k(d-1)}{\ell \, n \, (3 + 2\log_2(n-1))}.
$$
We thus have the following corollary to Proposition~\ref{treewidth_bnd},
which extends Corollary~\ref{cor3}.

\begin{corollary}
Given an integer $\,t > 0$, if $\mfC$ is a family of codes over $\F$ 
with the property that $\k(\cC;\mfG_\ell) \leq t$ for all $\cC \in \mfC$,
then $\mfC$ is not asymptotically good.
\label{cor4}
\end{corollary}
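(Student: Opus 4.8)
The plan is to derive Corollary~\ref{cor4} as a direct consequence of the chain of inequalities already assembled just before its statement, essentially making explicit the contrapositive argument. First I would fix an arbitrary integer $\ell > 0$ and recall that the preceding display, obtained by combining (\ref{kmfG_def}), Corollary~\ref{cor1} and Proposition~\ref{treewidth_bnd}, gives for every $[n,k,d]$ linear code $\cC$ with $n > 1$ the bound
\[
\k(\cC;\mfG_\ell) \geq \frac{k(d-1)}{\ell\, n\, (3 + 2\log_2(n-1))}.
\]
So if $\mfC$ is a code family with $\k(\cC;\mfG_\ell) \leq t$ for all $\cC \in \mfC$, then every $[n,k,d]$ code in $\mfC$ satisfies $k(d-1) \leq t\,\ell\,n\,(3 + 2\log_2(n-1))$, i.e.
\[
\frac{k}{n}\cdot\frac{d-1}{n} \leq \frac{t\,\ell\,(3 + 2\log_2(n-1))}{n}.
\]

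Next I would argue by contradiction: suppose $\mfC$ \emph{were} asymptotically good. Then there is a sequence of $[n_i,k_i,d_i]$ codes $\cC^{(i)} \in \mfC$ with $n_i \to \infty$ and $\liminf_i k_i/n_i = \rho > 0$ and $\liminf_i d_i/n_i = \delta > 0$. For all large enough $i$ we have $n_i > 1$, $k_i/n_i \geq \rho/2$, and $(d_i-1)/n_i \geq \delta/2$ (the $-1$ is absorbed since $d_i \to \infty$), so the left-hand side of the displayed inequality is bounded below by the positive constant $\rho\delta/4$. On the other hand, the right-hand side is $t\,\ell\,(3 + 2\log_2(n_i-1))/n_i$, which tends to $0$ as $i \to \infty$ because $\log_2 n_i / n_i \to 0$. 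This is a contradiction, so $\mfC$ cannot be asymptotically good.

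Finally, I would note that the case $\ell - 1 \geq 1$ recovers Corollary~\ref{cor3} as a special case, since by Lemma~\ref{tree_vctree_lemma} the family $\mfG_\ell$ contains all graphs of treewidth at most $\ell - 1$, and in particular all trees (taking $\ell = 1$ or any $\ell \geq 1$, since a tree has vc-treewidth $1 \leq \ell$), so a bound on $\k(\cC;\mfT)= \k_\tree(\cC)$ follows. This step is essentially bookkeeping.

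There is really no hard part here: the substantive work has all been done in Proposition~\ref{treewidth_bnd} and its corollaries. The only point requiring a sliver of care is handling the $d-1$ versus $d$ discrepancy and confirming that $\liminf$ hypotheses genuinely force the product $\frac{k_i}{n_i}\frac{d_i-1}{n_i}$ to stay bounded away from zero along the sequence; this is a standard $\liminf$ manipulation. I would therefore expect the proof to be a short paragraph in the final paper, mirroring the parenthetical derivation already given in the text immediately preceding the corollary.
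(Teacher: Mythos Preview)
Your proposal is correct and follows exactly the approach the paper intends: the paper does not even write out a separate proof, simply stating ``We thus have the following corollary to Proposition~\ref{treewidth_bnd}'' after the displayed inequality $\k(\cC;\mfG_\ell) \geq \frac{k(d-1)}{\ell\,n\,(3+2\log_2(n-1))}$, so your contrapositive argument just makes explicit what the paper leaves to the reader. Your handling of the $d-1$ versus $d$ and the $\liminf$ bookkeeping is fine and is indeed the only point needing any care.
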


A rough interpretation of the last result is that 
a ``good'' error-correcting code cannot have a low-complexity 
realization on a graph with small (vc-)treewidth. In other words,
codes that are good from an error-correcting standpoint can 
have low-complexity realizations only on graphs with large (vc-)treewidth.

\section{Concluding Remarks\label{conclusion}}

In this paper, we demonstrated at length the use of the Vertex-Cut Bound
in finding lower bounds on the $\k$-complexity of graphical realizations.
As suggested at the end of Section~\ref{cutset_bnd_section}, a natural
application of the Vertex-Cut Bound is to formulate constrained optimization
problems whose solutions are lower bounds to various measures of constraint 
complexity. This method could be used, for instance, to determine lower
bounds on the true computational complexity of sum-product decoding
for a given code. This can be done by choosing a constraint complexity
measure that accurately reflects the cost of sum-product decoding. 
Such a complexity measure can be chosen based on detailed counts 
of the number of arithmetic operations required in the sum-product algorithm;
see, for example, \cite{AM00},\cite{For01},\cite{For03}.

It is now probably fair to say that the main open problem in the area of 
graphical realizations of codes is to explicitly construct,
for a given code $\cC$ and an arbitrary graph $\cG$, 
realizations of $\cC$ on $\cG$ whose constraint complexity 
is within striking distance of the lower bounds found using 
the methods of this paper. While we have shown that our bounds 
can be tight for specific examples of codes $\cC$ and graphs $\cG$, 
their tightness in the generic instance remains to be investigated.


\appendix

\section{An Example\label{example_app}}

We provide here an example of a code $\cC$ and a graph decomposition 
$(\cG,\omega)$ of $\cC$, for which there is no realization 
$\G \in \mfR(\cC;\cG,\omega)$ such that 
$\k(\G) = \k(\cC;\cG,\omega)$ and $\k^+(\G) = \k^+(\cC;\cG,\omega)$.
The example we give is based on Example~3.1 in \cite{KV03}.

Consider the $[11,3,3]$ binary linear code $\cC$ generated by the codewords
$00011100000$, $00000111000$ and $00101010100$. We take 
$I = \{1,2,3,\ldots,11\}$ to be the index set of $\cC$, 
identifying the index $i$ with the $i$th coordinate of $\cC$. 
Now, let $\cG$ be the 11-cycle, and 
$\omega:I \to V(\cG)$ the index mapping depicted in Figure~\ref{11-cycle}. 
Figure~\ref{tailbiting_fig1} shows a tailbiting trellis realization, 
$\G_1$, of $\cC$ that extends $(\cG,\omega)$. Note that 
$\k(\G_1) = 2$, while $\k^+(\G_1) = 14$. A second tailbiting trellis
realization, $\G_2$, in $\mfR(\cC;\cG,\omega)$ is shown 
in Figure~\ref{tailbiting_fig2}, with $\k(\G_2) = 3$ and 
$\k^+(\G_2) = 13$. It can be shown (using arguments similar to those 
given in Example~3.1 in \cite{KV03}) that $\k(\G_1) = \k(\cC;\cG,\omega)$,
while $\k^+(\G_2) = \k^+(\cC;\cG,\omega)$, and that there is
no realization $\G \in \mfR(\cC;\cG,\omega)$ that simultaneously
achieves $\k(\cC;\cG,\omega)$ and $\k^+(\cC;\cG,\omega)$.

\begin{figure}
\epsfig{file=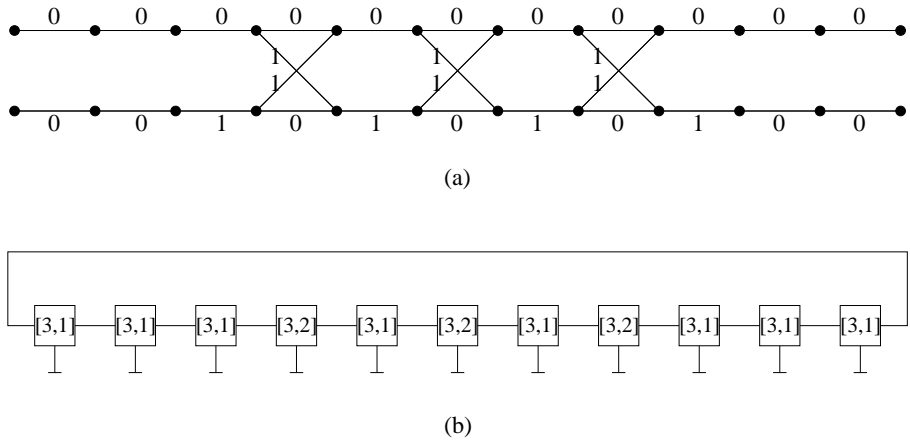, width=12cm}
\caption{A tailbiting trellis realization for an $[11,3,3]$ binary
linear code. (a)~Trellis diagram. (b)~Depiction of trellis realization 
showing the length and dimension of the local constraint codes; all
state spaces have dimension 1.}
\label{tailbiting_fig1}
\end{figure}

\begin{figure}
\epsfig{file=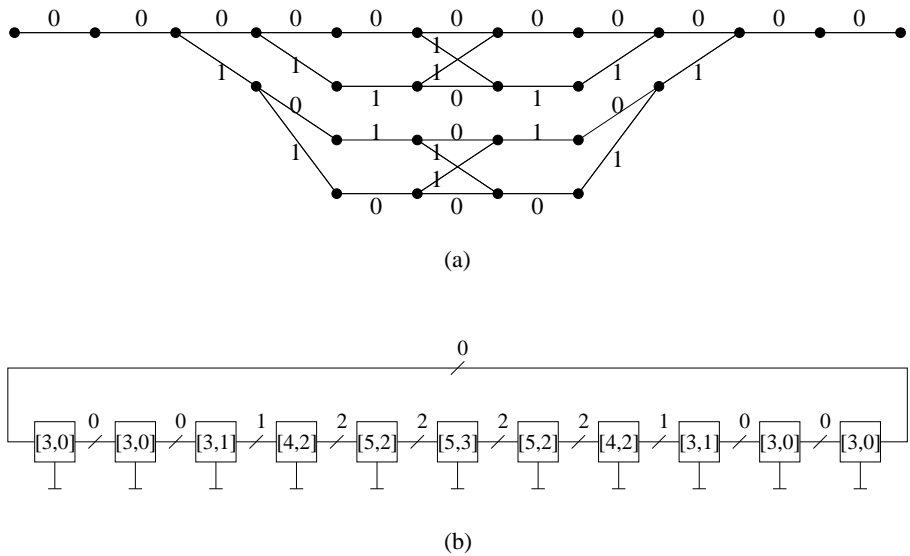, width=12cm}
\caption{A tailbiting trellis realization for an $[11,3,3]$ binary
linear code. (a)~Trellis diagram. (b)~Depiction of trellis realization 
showing  the length and dimension of the local constraint codes, and the
dimensions of the state spaces.}
\label{tailbiting_fig2}
\end{figure}


\begin{thebibliography}{99}

\bibitem{AM00} S.M.\ Aji and R.J.\ McEliece, 
``The generalized distributive law,'' 
\emph{IEEE Trans.\ Inform.\ Theory}, vol.\ 46, no.\ 2, pp.\ 325--343, 2000.

\bibitem{arnborg} S.\ Arnborg, D.G.\ Corneil and A.\ Proskurowski,
``Complexity of finding embeddings in a $k$-tree,''
\emph{SIAM J.\ Alg.\ Disc.\ Meth.}, vol.\ 8, pp.\ 277--284, 1987.

\bibitem{bod93} H.L.\ Bodlaender, ``A tourist guide through treewidth,''
\emph{Acta Cybernetica}, vol.\ 11, pp.\ 1--23, 1993.

\bibitem{BK96} H.L.\ Bodlaender, T.\ Kloks, 
``Efficient and constructive algorithms for the pathwidth and
treewidth of graphs,'' \emph{J.\ Algorithms}, vol.\ 21, no.\ 2,
pp.\ 358--402, 1996.

\bibitem{CFV99} A.R.\ Calderbank, G.D.\ Forney Jr., and A.\ Vardy, 
``Minimal tail-biting trellises: The Golay code and more,'' 
\emph{IEEE Trans.\ Inform.\ Theory}, vol.\ 45, pp.\ 1435--1455, July 1999.

\bibitem{For01} G.D.\ Forney Jr.,
``Codes on graphs: normal realizations,''
\emph{IEEE Trans.\ Inform.\ Theory}, 
vol.\ 47, no.\ 2, pp.\ 520--548, Feb.\ 2001.

\bibitem{For03} G.D.\ Forney Jr.,
``Codes on graphs: constraint complexity of cycle-free
realizations of linear codes,'' 
\emph{IEEE Trans.\ Inform.\ Theory}, 
vol.\ 49, no.\ 7, pp.\ 1597--1610, July 2003.


\bibitem{halford} T.R.\ Halford and K.M.\ Chugg,
``The extraction and complexity limits of graphical models
for linear codes,'' \emph{IEEE Trans.\ Inform.\ Theory}, 
to appear.

\bibitem{HW06} P.\ Hlin{\v e}n\'y and G.\ Whittle, ``Matroid tree-width,'' 
\emph{Europ.\ J.\ Combin.}, vol.\ 27, pp.\ 1117--1128, 2006.

\bibitem{jensen} F.V.\ Jensen, \emph{An Introduction to Bayesian Networks},
Springer, New York, 1996.

\bibitem{K2} N.\ Kashyap, 
``On minimal tree realizations of linear codes,''
submitted to \emph{IEEE Trans.\ Inform.\ Theory}. 
ArXiv e-print 0711.1383

\bibitem{KFL01} F.R.\ Kschischang, B.J.\ Frey and H.-A.\ Loeliger,
``Factor graphs and the sum-product algorithm,'' 
\emph{IEEE Trans.\ Inform.\ Theory},
vol.\ 47, no.\ 2, pp.\ 498--519, Feb.\ 2001.

\bibitem{KV02} R.\ Koetter and A.\ Vardy, 
``On the theory of linear trellises,'' 
in \emph{Information, Coding and Mathematics}, 
M.\ Blaum, P.G.\ Farrell and H.C.A.\ van Tilborg, eds., 
Kluwer, Boston, Mass., May 2002, pp.\ 323--354.

\bibitem{KV03} R.\ Koetter and A.\ Vardy, 
``The structure of tail-biting trellises: minimality and basic principles,'' 
\emph{IEEE Trans.\ Inform.\ Theory}, 
vol.\ 49, no.\ 9, pp.\ 2081--2105, Sept.\ 2003.

\bibitem{LV95} A.\ Lafourcade and A.\ Vardy, 
``Asymptotically good codes have infinite trellis complexity,'' 
\emph{IEEE.\ Trans.\ Inform.\ Theory}, 
vol.\ 41, no.\ 2, pp.\ 555--559, March 1995.


\bibitem{RS-I} N.\ Robertson and P.D.\ Seymour, 
``Graph minors. I. Excluding a forest,''
\emph{J.\ Combin.\ Theory, Ser.\ B}, vol.\ 35, pp.\ 39--61, 1983.


\bibitem{vardy} A.\ Vardy, ``Trellis Structure of Codes,'' 
in \emph{Handbook of Coding Theory}, R.\ Brualdi, C.\ Huffman and V.\ Pless,
Eds., Amsterdam, The Netherlands: Elsevier, 1998.

\bibitem{wiberg} N.\ Wiberg, \emph{Codes and Decoding on General Graphs},
Ph.D.\ thesis, Link\"oping University, Link\"oping, Sweden, 1996.

\bibitem{WLK95} N.\ Wiberg. H.-A.\ Loeliger and R.\ Koetter, 
``Codes and iterative decoding on general graphs,'' 
\emph{Euro.\ Trans.\ Telecommun.}, vol.\ 6, pp.\ 513--525, Sept./Oct.\ 1995.

\end{thebibliography}
\end{document}